\newif\iflong
\newif\ifshort
\let\oldbibliography\bibliography
\renewcommand{\bibliography}[1]{{%
  \let\chapter\section
  \oldbibliography{#1}}}
\newtheorem{observation}{Observation}
\newtheorem{assumption}{Assumption}
\newcommand{\paramproblem}[4]{\noindent {\sc #1}
\\
{\bf Given:} #2\\
{\bf Parameter:} #3\\
{\bf Question:} #4}
\newcommand{\pc}{$P$-component}
\newcommand{\pcs}{$P$-components}
\newcommand{\btp}{{\sc Balanced Tree Partitioning}}
\newcommand{\threebtp}{{\sc Balanced Degree-3 Tree Partitioning}}
\newcommand{\threegtp}{{\sc Degree-3 Tree Partitioning}}
\newcommand{\gtp}{{\sc Tree Partitioning}}
\newcommand{\mcc}{{\sc $k$-Multi-Colored Clique}}
\newcommand{\Oh}{{\mathcal O}}
\newcommand{\nat}{\mathbb{N}}
\newcommand{\integ}{\mathbb{Z}^+}
\newcommand{\pp}{\Diamond}
\newcommand{\CR}{{\bf Check-Realizability}}
\newcommand{\srep}{$\sigma$-representation}
\newcommand{\sreps}{$\sigma$-representations}
\newcommand{\true}{{\sc true}}
\newcommand{\false}{{\sc false}}
\newcommand{\Pol}{\mbox{$\mathcal P$}}
\newcommand{\APX}{\mbox{$\mathcal APX$}}
\newcommand{\fpt}{\mbox{$\mathcal{FPT}$}}
\newcommand{\NP}{\mbox{$\mathcal{NP}$}}
\newcommand{\f}{\mathcal{\lambda}_P}
\def\eg{{\em e.g.}}
\def\ie{{\em i.e.}}
\def\etal{{\em et al.}}
\begin{document}
\pagestyle{plain}

\title{The Complexity of Tree Partitioning}
\author{{\sc Zhao An}\inst{1} \ \ {\sc Qilong Feng}\inst{1} \ \ {\sc Iyad Kanj}\inst{2} \ \ {\sc Ge
Xia}\inst{3}}

\institute{
School of Information Science and Engineering,
Central South University, China.
{\tt
anzhao1990@126.com, csufeng@csu.edu.cn}
\and
School of Computing, DePaul
University, Chicago, IL. {\tt
ikanj@cs.depaul.edu}
\and
Dept. of Computer
Science, Lafayette College, Easton, PA. {\tt xiag@lafayette.edu}}

\date{}

\maketitle
\setcounter{footnote}{0}
\begin{abstract}
Given a tree $T$ on $n$ vertices, and $k, b, s_1, \ldots, s_b  \in \nat$, the \gtp~problem asks if at most $k$ edges can be removed from $T$ so that the resulting components can be grouped into $b$ groups such that the number of vertices in group $i$ is $s_i$, for $i =1, \ldots, b$. The case when $s_1=\cdots =s_b =n/b$, referred to as the \btp~problem, was shown to be \NP-complete for trees of maximum degree at most 5, and the complexity of the problem for trees of maximum degree 4 and 3 was posed as an open question.~The parameterized complexity of \btp~was also posed as an open question in another work.

In this paper, we answer both open questions negatively. We show that \btp~(and hence, \gtp) is \NP-complete for trees of maximum degree 3, thus closing the door on the complexity of \btp, as the simple case when $T$ is a path is in \Pol. In terms of the parameterized complexity of the problems, we show that both \btp~and \gtp~are $W[1]$-complete. Finally, using a compact representation of the solution space for an instance of the problem, we present a dynamic programming algorithm for \gtp~(and hence, for \btp) that runs in subexponential-time $2^{O(\sqrt{n})}$, adding a natural problem to the list of problems that can be solved in subexponential time.
\end{abstract}

\section{Introduction} \label{sec:intro}
\paragraph{{\bf Problem Definition and Motivation.}} We consider the \gtp~problem defined as follows: \\

\paramproblem{\gtp} {A tree $T$; $k, b, s_1, \ldots, s_b  \in \nat$}{$k$}{Does there exist a subset $E' \subseteq E(T)$ of at most $k$ edges such that the components of $T-E'$ can be grouped into $b$ groups, where group $i$ contains $s_i$ vertices, for $i=1, \ldots, b$?}

The special case of the problem when $s_1=\cdots =s_b =|V(T)|/b$~\iflong~(\ie, when all the groups have the same size)~\fi is referred to as \btp\iflong\footnote{In a variant of the \btp~problem, the group sizes in the solution are allowed to differ by 1. All the results in this paper still hold for this variant of the problem.}\fi.

The two problems are special cases of the {\sc Balanced Graph Partitioning} problem, which has applications in the areas of parallel computing~\cite{application1}, computer vision~\cite{application1}, VLSI circuit design~\cite{application2}, route planning~\cite{application3}, and image processing\ifshort~\cite{application4}\fi \iflong~\cite{application4,application5}\fi. \iflong In such applications, the problem that arises is to schedule $n$ objects (modeled as graph vertices) into $b$ groups such that the intercommunication between the objects (modeled as graph edges) is minimized.~\fi The special case of {\sc Balanced Graph Partitioning}, corresponding to $b=2$, is the well-known~\NP-complete problem {\sc Bisection}~\cite{gj}.
The {\sc Balanced Graph Partitioning} problem has received a lot of attention from the area of approximation theory (for instance, see~\cite{feldmannthesis,racke,racke1}). Moreover, the complexity and the approximability of the problem restricted to special graph classes, such as grids, trees, and bounded degree trees~\cite{feldmannthesis,feldmanngrid,feldmann,macgregor}, have been studied.

\paragraph{{\bf Our Results.}} We study the complexity and the parameterized complexity of \gtp~and \btp, and design subexponential time algorithms for these problems. Our results are:

\begin{itemize}
\item[(A)] We prove that \btp, and hence \gtp, is \NP-complete for trees with maximum degree at most 3. This answers an open question in~\cite{feldmann} about the complexity of \btp~for trees of maximum degree 4 and 3, after they had shown the \NP-completeness of the problem for trees of maximum degree at most 5. This also closes the door on the complexity of these problems on trees, as the simple case when the tree is a path is in \Pol.

\item[(B)] We prove that both \gtp~and \btp~are $W[1]$-complete. This answers an open question in~\cite{bevern}. \iflong We observe that, for trees, the removal of $k$ edges results in $k+1$ components. Since the number of groups $b$ is at most $k+1$ (an upper bound on the number of components), the $W[1]$-hardness results with respect to parameter $k$ imply the $W[1]$-hardness of the problems with respect to the parameter-combination $(b, k)$.\fi  We also prove the membership of the problems in the class $W[1]$, using the characterization of $W[1]$ given by Chen \etal~\cite{yijia}.

\item[(C)] We present an exact subexponential-time algorithm for \gtp, and hence for \btp, that runs in time $2^{\Oh(\sqrt{n})}$, where $n$ is the number of vertices in the tree.
\end{itemize}

For the lack of space, many details and proofs in this paper have been omitted.

\paragraph{{\bf Related Work and Our Contributions.}}
Feldmann and Foschini~\cite{feldmann} studied \btp. They showed that the problem is \NP-complete for trees of maximum degree at most 5, and left the question about the complexity of the problem for maximum degree 4 and 3 open. Whereas the reduction used in the current paper to prove the \NP-hardness of \btp~on trees of maximum degree at most 3 starts from the same problem ({\sc 3-Partition}) as in~\cite{feldmann}, and is inspired by their construction, the reduction in this paper is much more involved in terms of the gadgets employed and the correctness proofs. \iflong Feldmann and Foschini~\cite{feldmann} also showed that \btp~is \APX-hard, with respect to the size of the cut (\ie, the number of edges cut), for trees of maximum degree at most 7.  They also considered the problem where one seeks a partitioning that approximates a balanced partitioning to within factor $(1+\epsilon)$ (\ie, the size of each group is within a factor of $(1+\epsilon)$ from the size $n/b$ of a balanced partitioning), and showed that this problem admits a PTAS with respect to this notion of approximation. MacGregor, in his Ph.D. thesis~\cite{macgregor} in 1978, proposed a greedy algorithm that approximates \btp~on trees of constant maximum degree to within factor $\Oh(\lg{n}/b)$ from the optimal solution. Refer to the Ph.D. thesis of Feldmann~\cite{feldmannthesis} for more information on polynomial-time algorithms and approximation algorithms for \btp~on special graph classes. \fi

Bevern \etal~\cite{bevern} showed that the parameterized complexity of {\sc Balanced Graph Partitioning} is $W[1]$-hard when parameterized by the combined parameters $(k, \mu)$, where $k$ is (an upper bound on) the cut size, and $\mu$ is (an upper bound on) the number of resulting components after the cut. It was observed in~\cite{bevern}, however, that the employed \fpt-reduction yields graphs of unbounded treewidth, which motivated the authors to ask about the parameterized complexity of the problem for graphs of bounded treewidth, and in particular for trees. We answer their question by showing that the problem is $W[1]$-complete.

Bevern \etal~\cite{bevern} also showed that {\sc Balanced Graph Partitioning} is $W[1]$-hard on forests by a reduction from the {\sc Unary Bin Packing} problem, which was shown to be $W[1]$-hard in~\cite{binpacking}. We note that the disconnectedness of the forest is crucial to their reduction, as they  represent each number $x$ in an instance of {\sc Bin Packing} as a separate path of $x$ vertices.
For \btp, in contrast to {\sc Unary Bin Packing} (and hence, to {\sc Balanced Graph Partitioning} on forests), the difficulty is not in grouping the components into groups (bins) because enumerating all possible distributions of $k+1$ components (resulting from cutting $k$ edges) into $b \leq k+1$ groups can be done in \fpt-time; the difficulty, however, stems from not knowing which tree edges to cut. The \fpt-reduction we use to show the $W[1]$-hardness is substantially different from both of those in~\cite{bevern,binpacking}, even though we use the idea of non-averaging sets in our construction---a well-studied notion in the literature (e.g., see~\cite{nonaveraging}), which was used for the $W[1]$-hardness result of {\sc Unary Bin Packing} in~\cite{binpacking}.

Many results in the literature have shown that certain \NP-hard graph problems are solvable in subexponential time. Some of these rely on topological properties of the underlying graph that guarantee the existence of a balanced graph-separator of sub-linear size, which can then be exploited in a divide-and-conquer approach (\eg, see~\cite{Demaine,genusicalp}). There are certain problems on restricted graph classes that resist such approaches due to the the problem specifications; designing subexponential-time algorithms for such problems usually require exploiting certain properties of the solution itself, in addition to properties of the graph class (see~\cite{fominsocg,marxklein} for such recent results). In the case of \gtp~and \btp, since every tree has a balanced separator consisting of a single vertex, yet the two problems remain \NP-hard on trees, clearly a divide-and-conquer approach based solely on balanced separators does not yield subexponential-time algorithms for these problems. To design subexponential-time  algorithms for them, we rely on the observation that the number of possible partitions of an integer $n \in \nat$ is subexponential in $n$; this allows for a ``compact representation'' of all solutions using a solution space of size $2^{\Oh(\sqrt{n})}$, enabling a dynamic programming approach that solves the problems within the same time upper bound.

\ifshort
\paragraph{{\bf Terminologies.}}
 We refer the reader to~\cite{fptbook,diestel} for more information about graph theory and parameterized complexity.

Let $T$ be a rooted tree.  For an edge $e=uv$ in $T$ such that $u$ is the parent of $v$, by the subtree of $T$ {\em below} $e$ we mean the subtree $T_v$ of $T$ rooted at $v$. For two edges $e, e'$ in $T$, $e$ is said to be {\em below} $e'$ if $e$ in an edge of the subtree of $T$ below $e'$.  \iflong A {\em binary tree} is a rooted tree in which each vertex has at most two children.\fi A {\em nice binary tree} $T$ is defined recursively as follows. If $|V(T)| \leq 1$ then $T$ is a nice binary tree. If $V(T) > 1$, then $T$ is nice if (1) each of the left-subtree and right-subtree of $T$ is nice and (2) the sizes of the left-subtree and the right-subtree differ by at most 1. For any $n \in \nat$, there is a nice binary tree of order $n$. A {\em star} $S$ is a tree consisting of a single vertex $r$, referred to as the root of the star, attached to degree-1 vertices, referred to each as a {\em star-leaf}; we refer to an edge between $r$ and a leaf in $S$ as a {\em star-edge}; we refer to a subtree of $S$ containing $r$ as a {\em substar} of $S$.

A {\em solution} $P$ to an instance $(T, k, b, s_1, \ldots, s_b)$ of \gtp~is a pair $(E_P, \mathcal{\lambda}_P)$, where $E_P$ is a set of $k$ edges in $T$, and $\mathcal{\lambda}_P$ is an assignment that maps the connected components in $T-E_P$ into $b$ groups so that the total number of vertices assigned to group $i$ is $s_i$, for $i \in [b]$. We call a connected component in $T-E_P$ a {\em \pc{}}, and denote by $C_P$ the set of all \pcs{}.

By a {\em cut in a tree} $T$ we mean the removal of an edge from $T$. A solution $P =(E_P, \mathcal{\lambda}_P)$ to $(T, k, b, s_1, \ldots, s_b)$ \emph{cuts} an edge $e$ in $T$ if $e \in E_P$. Let $T'$ be a subtree of $T$ such that $P$ cuts at least one edge in $T'$. By a \emph{lowest} \pc{} in $T'$ we mean a subtree $T''$ below an edge $e$ of $T'$ such that $T''$ is a \pc{} (\ie, $P$ does not cut any edge below $e$ in $T'$).

The restriction of \gtp~to instances in which $s_1=\cdots=s_b=|T|/b$ is denoted \btp; an instance of \btp~is a triplet $(T, k, b)$. The restriction of \gtp~and \btp~to trees of maximum degree at most 3 are denoted \threegtp~and \threebtp, respectively.
For $\ell \in \nat$, we write $[\ell]$ for the set $\{1, \ldots, \ell\}$.
\fi

\iflong
\section{Preliminaries}\label{sec:prelim}

\paragraph{{\bf Graphs, Trees and Stars.}} A \emph{tree} $T$ is an undirected acyclic graph. A \emph{forest} is a disjoint union of trees. We write $V(T)$ and $E(T)$ for the vertex-set and edge-set of $T$, respectively. By $|T|$ we denote the \emph{order} of $T$, which is $|V(T)|$. A \emph{subtree} of $T$ is a tree induced by a subset of $V(T)$.  For a set of edges $E'$ in $T$, by $T-E'$ we denote the forest whose vertex-set is $V(T)$ and edge-set is $E(T)\setminus E'$. For two forests $F$ and $F'$, we write $F-F'$ for the forest induced by the vertex-set $V(F)\setminus V(F')$.

A \emph{rooted} tree is a tree with a vertex designated as the root. For a rooted tree $T$, we can define the parent-child and ancestor-descendant relations on the vertex-set of $T$ in a natural way. For a rooted tree $T$ and a vertex $v \in V(T)$, we write $T_v$ for the subtree of $T$ rooted at $v$.

A {\em binary tree} is a rooted tree in which each vertex has at most two children. A {\em nice binary tree} $T$ is a binary tree defined recursively as follows. If $|V(T)| \leq 1$ then $T$ is a nice binary tree. If $V(T) > 1$, then $T$ is nice if (1) each of the left-subtree and right-subtree of $T$ is nice and (2) the sizes of the left-subtree and the right-subtree differ by at most 1. It is clear that for any $n \in \nat$, there is a nice binary tree of order $n$.

Let $T$ be a rooted tree.  For an edge $e=uv$ in $T$ such that $u$ is the parent of $v$, by the subtree of $T$ {\em below} $e$ we mean the subtree $T_v$ of $T$ rooted at $v$. For two edges $e, e'$ in $T$, $e$ is said to be {\em below} $e'$ if $e$ in an edge of the subtree of $T$ below $e'$.

A {\em star} $S$ is a tree consisting of a single vertex $r$, referred to as the root of the star, attached to degree-1 vertices, referred to each as a {\em star-leaf}; we refer to an edge between $r$ and a leaf in $S$ as a {\em star-edge}; we refer to a subtree of $S$ containing $r$ as a {\em substar} of $S$.

\paragraph{{\bf \gtp~and Its Related Terminologies.}}


A {\em solution} $P$ to an instance $(T, k, b, s_1, \ldots, s_b)$ of \gtp~is a pair $(E_P, \mathcal{\lambda}_P)$, where $E_P$ is a set of $k$ edges in $T$, and $\mathcal{\lambda}_P$ is an assignment that maps the connected components in $T-E_P$ into $b$ groups so that the total number of vertices assigned to group $i$ is $s_i$, for $i \in [b]$. We call a connected component in $T-E_P$ a {\em \pc{}}, and denote by $C_P$ the set of all \pcs{} in $T-E_P$.

By a {\em cut in a tree} $T$ we mean the removal of an edge from $T$. We say that a solution $P =(E_P, \mathcal{\lambda}_P)$ to an instance  $(T, k, b, s_1, \ldots, s_b)$ of \gtp~\emph{cuts} an edge $e$ in $T$ if $e \in E_P$. For a subtree $T'$ of $T$ such that $P$ cuts at least one edge in $T'$, by a \emph{lowest} \pc{} in $T'$ we mean a subtree $T''$ below an edge $e$ of $T'$ such that $T''$ is a \pc{} (\ie, $P$ does not cut any edge below $e$ in $T'$).

The restriction of \gtp~to instances in which $s_1=\cdots=s_b=|T|/b$ is denoted \btp; an instance of \btp~is specified as a triplet $(T, k, b)$. The restriction of \gtp~and \btp~to trees of maximum degree at most 3 are denoted \threegtp~and \threebtp, respectively.

\paragraph{{\bf Parameterized Complexity.}} A {\it parameterized problem} is a set of instances of the form $(x, k)$, where $x \in \Sigma^*$ for a finite alphabet set $\Sigma$, and $k \in \nat$ is the {\em parameter}.
A parameterized problem $Q$ is {\it fixed parameter tractable} (\fpt), if there exists an algorithm that on input $(x, k)$
decides if $(x, k)$ is a yes-instance of $Q$ in time $f(k)|x|^{\Oh(1)}$,
where $f$ is a computable function; we will denote by {\em \fpt-time} a running time of the form $f(k)|x|^{\Oh(1)}$. A parameterized problem $Q$
is {\it \fpt-reducible} to a parameterized problem $Q'$, written $Q \preceq_{fpt} Q'$, if there is an algorithm that transforms each instance $(x, k)$ of $Q$
into an instance $(x', g(k))$ of
$Q'$ in \fpt-time, where $g$ is a computable function, and such that $(x, k) \in Q$ if and
only if $(x', g(k)) \in Q'$. A parameterized complexity hierarchy, {\it the $W$-hierarchy} $\bigcup_{t
\geq 0} W[t]$, was introduced based on the notion of \fpt-reduction, in which the $0$-th level $W[0]$ is the class $\fpt$.  It is commonly believed that $W[1] \neq \fpt$. For more information about parameterized complexity, we refer the reader to~\cite{fptbook,grohebook,rolfbook}.

For $\ell \in \nat$, we write $[\ell]$ for the set $\{1, \ldots, \ell\}$.
\fi
 \iflong
\section{\threebtp~and \threegtp~are~\NP-complete}\label{sec:npcomplete}
\fi
 \ifshort
\section{\NP-completeness}\label{sec:npcomplete}
\fi
In this section, we show that \threebtp{}, and hence \threegtp, is~\NP-complete. Without loss of generality, we will consider the version of \threebtp{} in which we ask for a cut of size exactly $k$, as opposed to at most $k$; it is easy to see that the two problems are polynomial-time reducible to one another.

To prove that \threebtp{} is \NP-hard, we will show that the strong \NP-hard problem {\sc 3-Partition}~\cite{gj} is polynomial-time reducible to it. Our reduction is inspired by the construction of Feldmann and Foschini~\cite{feldmann}. Whereas the construction in~\cite{feldmann} uses gadgets each consisting of five chains joined at a vertex, the construction in this paper uses gadgets consisting of nearly-complete binary trees, that we refer to as nice binary trees. The idea behind using nice binary trees is that we can combine them to construct a degree-3 tree in which the cuts must happen at specific edges in order to produce components of certain sizes.

An instance of the {\sc 3-Partition} problem consists of an integer $s > 0$ and a collection $S=\langle a_1, \ldots, a_{3k} \rangle$ of $3k$ positive integers, where each $a_i$ satisfies $s/4 < a_i < s/2$, for $i\in [3k]$. The problem is to decide whether $S$ can be partitioned into $k$ groups $S_1, \ldots, S_k$, each of cardinality 3, such that the sum of the elements in each $S_i$ is $s$, for $i \in [k]$.  

\iflong
Let $(S=\langle a_1, \ldots, a_{3k} \rangle, s)$ be an instance of {\sc 3-Partition}. If we multiply $s$ and each $a_i \in S$, $i \in[3k]$, by any fixed $x \in \integ$, we obtain an equivalent instance of {\sc 3-Partition}. For the purpose of this reduction, we will apply the following (polynomial-time) transformation that either rejects the instance, or transforms it into an equivalent instance of {\sc 3-Partition}:
\begin{enumerate}
\item Multiply $s$ and each element in $S$ by $4$. As a consequence, for each element $a_i$ in $S$, we now have $s/4 + 1 \leq a_i \leq s/2 -1$ because $s/4 < a_i < s/2$ (by the problem definition) and $s$ is divisible by $4$.

\item If there is an element $a_i = s/2-1$, then reject the instance because $a_i$ cannot be grouped with two other elements in $S$ to make a group of size $s$ (because every element in $S$ is at least $s/4+1$). If there is an element $a_i = s/2-2$, then the only way that $a_i$ can be grouped with two elements in $S$ to give a total sum of $s$, is to group $a_i$ with two elements each of size $s/4+1$. If there are two other elements each of value $s/4+1$, then remove $a_i$ and these two elements; otherwise reject the instance. Now we have $s > 4$ and $s/4+1 \leq a_i \leq s/2-3$, for $i \in [3k]$.

\item Multiply $s$ and each element in $S$ by $6k$. Now we have $s > 24k$ and $s/4+6k \leq a_i \leq s/2-18k$, for $i \in [3k]$.
\end{enumerate}

For later use, we summarize all the above in an assumption below:

\begin{assumption}\label{assumption:ass}
We assume that: (1) $s$ is a multiple of 4; (2) $s > 24k$; and (3) $s/4+6k \leq a_i \leq s/2-18k$, for $i \in [3k]$.
\end{assumption}
\fi

For the reduction, we construct a degree-3 tree $T$ as follows. For each $a_i \in S$, we create a binary tree $T_i$, whose left subtree $L_i$ is a nice binary tree of size $a_i$, and whose right subtree $R_i$ is a nice binary tree of size $s-2$. We denote by $R^l_i$ and $R^r_i$ the left and right subtrees of $R_i$, respectively. Let $H=(p_1, \ldots, p_{3k})$ be a path on $3k$ vertices. The tree $T$ is constructed by adding an edge between each $p_i$ in $H$ and the root of $T_i$, for $i\in [3k]$. See Figure~\ref{fig:tree} for illustration. It is clear from the construction that $T$ is a degree-3 tree of $4k\cdot s$ vertices, since each $T_i$ has size $a_i + s-1$ and $P$ has $3k$ vertices. We will show that $(S, s)$ is a yes-instance of {\sc 3-Partition} if and only if the instance $I=(T, 6k-1, b=4k)$ is a yes-instance of \threebtp. \iflong We will prove the aforementioned statement by proving a sequence of lemmas.\fi


\begin{figure}[htbp]
\begin{overpic}[clip, trim=1cm 6.8cm 1cm 1.5cm, width=1.00\textwidth,tics=10]{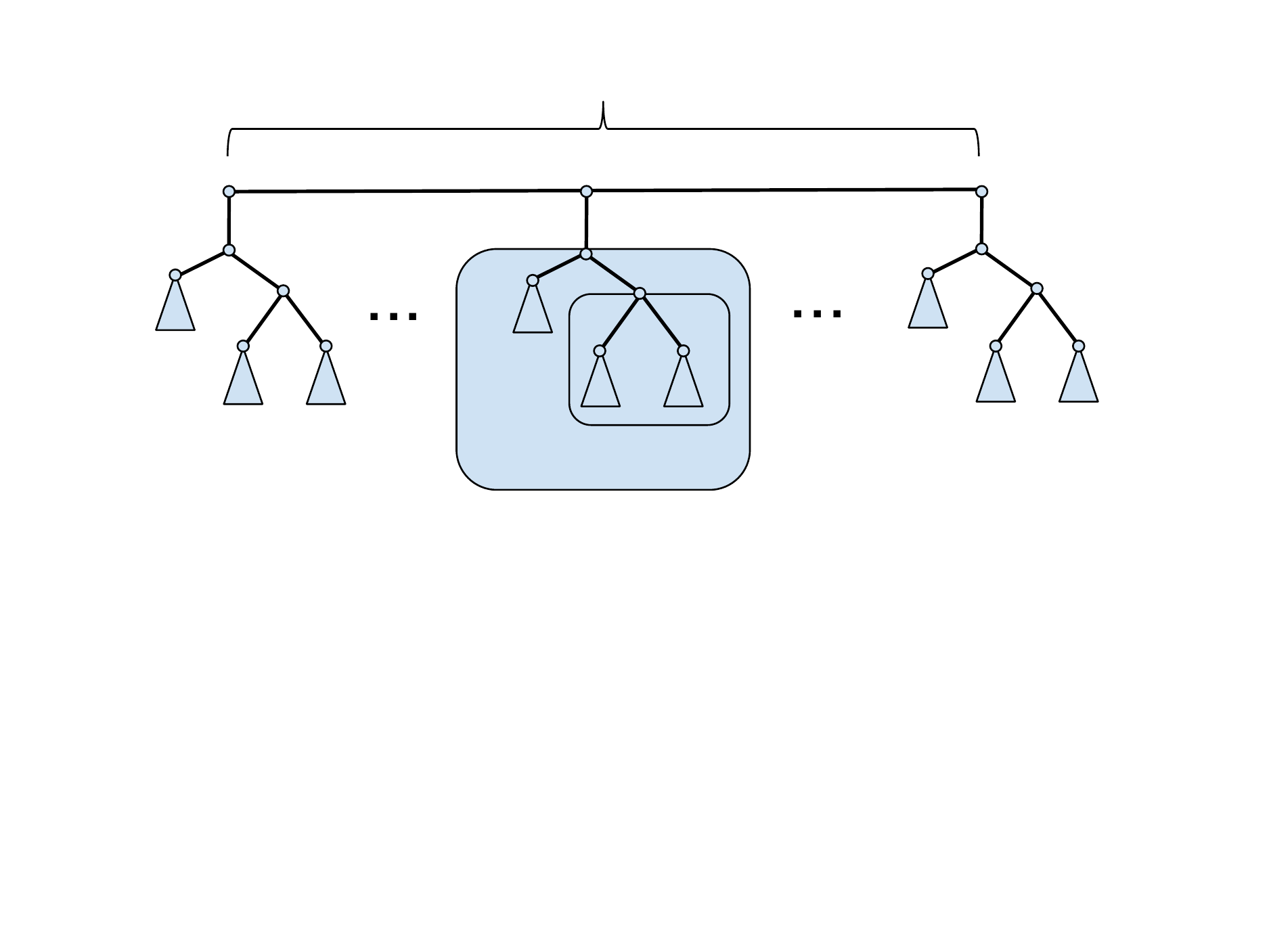}
 \put (25,28) {$H$}
 \put (46,35) {$3k$}
 \put (10,27) {$p_1$}
 \put (82,27) {$p_{3k}$}
 \put (47,28) {$p_{i}$}

 \put (15,-2) {$T_{1}$}
 \put (46,-3) {$T_{i}$}
 \put (82,-2) {$T_{3k}$}

 \put (8,12) {$a_{1}$}
 \put (40,12) {$a_{i}$}
 \put (74,12) {$a_{3k}$}

 \put (13,5) {$\frac{s}{2}$-1}
 \put (21,5) {$\frac{s}{2}$-2}

 \put (45,3) {$\frac{s}{2}$-1}
 \put (53,3) {$\frac{s}{2}$-2}

 \put (80,5) {$\frac{s}{2}$-1}
 \put (88,5) {$\frac{s}{2}$-2}

 \put (55,19) {$R_i$}
 \put (37,19) {$L_i$}
 \put (48,12) {$R^l_i$}
 \put (55,12) {$R^r_i$}
\end{overpic}
\caption{Illustration of the construction of the tree $T$.}
\label{fig:tree}
\end{figure}

Note that the size of $T$ is $4k\cdot s$, and hence, if the vertices in $T$ can be grouped into $4k$ groups of equal size, then each group must contain $s$ vertices.
From the aforementioned statement, it follows that at least one cut is required in each tree $T_i$ because the size of each $T_i$ is $a_i + s-1 > s$.

\ifshort
Suppose that the instance $I$ has a solution $P$ that cuts $6k-1$ edges in $T$.
\fi

\iflong
Suppose that the instance $I$ has a solution $P$ that cuts $6k-1$ edges in $T$.   Let $C'_P \subseteq C_P$ be such that each \pc{} in $C'_P$ is contained in some $T_i$. We call the forest resulting after all \pcs{} in $C'_P$ are removed from $T$ a {\em partial-$T$}, and the remaining portion of each $T_i$ (respectively $R_i$) in a partial-$T$ a {\em partial}-$T_i$ (respectively {\em partial}-$R_i$).

\begin{lemma} \label{lem:1}
Let $m \in \nat$, and let $T'$ be a partial-$T$. Then $T'$ cannot satisfy all of the following conditions:
\begin{enumerate}

    \item $|\mathcal{T}| = m$, where $\mathcal{T}$ is the set of partial-$T_i$'s in $T'$, each of size at least $s-1+a_i - 6k$ and its partial-$R_i$ is not a \pc{}.
    \item $T'$ contains $2m$ \pcs{}.
    \item $T'$ contains a partial-$T_{i_0} \notin \mathcal{T}$, $i_0 \in [3k]$, of size at least $a_{i_0}+1-6k$.
\end{enumerate}
\end{lemma}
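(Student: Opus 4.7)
The plan is to argue by contradiction: assume $T'$ satisfies all three conditions simultaneously and count the cuts of $P$ in the $T_i$'s and in $H$ to derive $|E_P| > 6k-1$, violating the budget. First I would focus on each $T_i \in \mathcal{T}$. Condition 1 gives $|\text{partial-}T_i| \geq s-1+a_i - 6k$, so $|V(T_i) \cap \bigcup_{C\in C'_P} V(C)| \leq 6k$. By Assumption~\ref{assumption:ass}, $|R_i| = s - 2 > 6k$, hence partial-$R_i$ is nonempty. The hypothesis that partial-$R_i$ is not a \pc{} then rules out the clean pattern ``cut the single edge at root-of-$R_i$ and keep the resulting $R_i$-piece''; instead it forces either (a) the edge between root-of-$T_i$ and root-of-$R_i$ is uncut, so partial-$R_i$ extends upward through root-of-$T_i$ into a larger \pc{}, or (b) the edge is cut \emph{and} two or more internal pieces of $R_i$ survive in $T'$, requiring multiple cuts inside $R_i$. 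A case analysis on (a) vs.\ (b), combined with the fact that $|T_i| > s$ already requires at least one cut associated with $T_i$ to prevent $T_i$ from being part of an oversized \pc{}, extracts a lower bound on the number of edges of $E_P$ incident to $T_i$.

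Second, I would handle $T_{i_0}$ via condition 3. Since $T_{i_0} \notin \mathcal{T}$ but $|\text{partial-}T_{i_0}| \geq a_{i_0} + 1 - 6k$, one of the two defining clauses of $\mathcal{T}$ must fail: either the size bound fails (so more than $s - 2$ vertices of $T_{i_0}$ are in $\bigcup_{C\in C'_P} V(C)$, which cannot be realized by a single cut because the largest separable subtree of $T_{i_0}$ is $R_{i_0}$ of size $s-2$, so at least two cuts are forced), or partial-$R_{i_0}$ is itself a \pc{} (which forces a cut at root-of-$T_{i_0}$-root-of-$R_{i_0}$ together with a choice that leaves one internal piece of $R_{i_0}$ outside $C'_P$, again forcing a cut configuration). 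Either way, $T_{i_0}$ contributes multiple cuts. Each remaining $T_j$ with $j \notin \mathcal{T}\cup\{i_0\}$ contributes at least one cut since $|T_j| > s$.

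Third, I would exploit condition 2. Interpreting ``$T'$ contains $2m$ \pcs{}'' as $|C_P \setminus C'_P| = 2m$, equivalently $|C'_P| = 6k - 2m$, and using the easy identity (provable by induction on cuts in a $T_i$) that the number of \pcs{} of $P$ contained in $T_i$ equals the number of edges of $E_P$ incident to $T_i$ (counting the $p_i$-root edge), I get that the total number of cuts inside the $T_i$'s is at least $6k - 2m$. Summing the per-$T_i$ lower bounds from the first two steps with this global bound, and subtracting from the total budget $6k-1$, should leave a negative slack, yielding the desired contradiction.

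The main obstacle is step one: obtaining a per-$T_i$ cut lower bound strong enough to defeat the budget $6k-1$ across \emph{all} admissible values of $m$. A naive ``$\geq 2$ cuts per $T_i \in \mathcal{T}$'' bound is too weak for small $m$, so I expect the correct argument must further exploit the interaction between condition 1 (which constrains the size of removed pieces inside $T_i$ to $\leq 6k$) and condition 2 (which pins down exactly how many pieces lie in $C'_P$). The case analysis must also use Assumption~\ref{assumption:ass}---in particular $s > 24k$ and $a_i \geq s/4 + 6k$---to rule out borderline scenarios in which a single carefully placed cut might otherwise satisfy the size and \pc{}-structure requirements.
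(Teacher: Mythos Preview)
Your approach is a genuine wrong turn. The paper's proof does \emph{not} count edges at all; the contradiction it derives is that some \pc{} has size strictly greater than $s$, which is incompatible with $P$ being a solution in which every group has size $s$. Your plan to squeeze out $|E_P|>6k-1$ cannot succeed, because in the intended solution $P$ uses \emph{exactly} $6k-1$ cuts, so there is no slack in the edge budget to exploit. The obstacle you identify in your own step one is real and fatal: there is no per-$T_i$ cut lower bound strong enough to push the total past $6k-1$.

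Here is the paper's actual argument, which you should adopt. Each partial-$T_i\in\mathcal{T}$ has size at least $s-1+a_i-6k>s$, so it contains at least one \pc{}. Partition $\mathcal{T}$ into $\mathcal{T}_1$ (those containing exactly one \pc{}) and $\mathcal{T}_2$ (those containing at least two). Removing all \pcs{} contained in the members of $\mathcal{T}$ uses up at least $|\mathcal{T}_1|+2|\mathcal{T}_2|$ of the $2m=2(|\mathcal{T}_1|+|\mathcal{T}_2|)$ \pcs{} guaranteed by condition~2, so at most $|\mathcal{T}_1|$ \pcs{} remain in $T'$. For each $T_i\in\mathcal{T}_1$, the single removed \pc{} is not partial-$R_i$ (this is exactly what condition~1 is for), hence it is a subtree of $T_i$ other than $R_i$ and therefore has size at most $s/2-1$; the leftover piece of that partial-$T_i$ thus has size at least $s/2+a_i-6k$. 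We now have $|\mathcal{T}_1|$ such large leftover pieces, none containing a \pc{}, plus the partial-$T_{i_0}$ from condition~3 of size at least $a_{i_0}+1-6k$, all to be covered by at most $|\mathcal{T}_1|$ remaining \pcs{}. No two leftover pieces can share a \pc{} (their combined size already exceeds $s$), so one \pc{} must contain a leftover piece together with partial-$T_{i_0}$; its size is at least $(s/2+a_i-6k)+(a_{i_0}+1-6k)>s$ by Assumption~\ref{assumption:ass} ($a_i,a_{i_0}\geq s/4+6k$). That is the contradiction.

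Note in particular that you misread the role of ``partial-$R_i$ is not a \pc{}'': it is not there to force extra cuts via a case analysis, but simply to cap the size of the one \pc{} removed from a $\mathcal{T}_1$-member at $s/2-1$.
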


\begin{proof}
Suppose that $T'$ satisfies all the above conditions. We will show that at least one \pc{} has size greater than $s$.

First observe that each partial-$T_i$ in $\mathcal{T}$ has at least one \pc{} because its size is at least $s-1+a_i - 6k > s$. Let $\mathcal{T}_1$ be the set of  partial-$T_i$'s in $\mathcal{T}$ each containing exactly one \pc{}. Let $\mathcal{T}_2$ be set of partial-$T_i$ in $\mathcal{T}$ each containing two or more \pcs{}. Remove all the \pcs{} contained in $\mathcal{T}$ from $T'$. At least $|\mathcal{T}_1|+2|\mathcal{T}_2|$ \pcs{} are removed from $T'$, and hence at most $2m-|\mathcal{T}_1|-2|\mathcal{T}_2| = |\mathcal{T}_1|$ \pcs{} remain in $T'$, because $m=|\mathcal{T}_1|+|\mathcal{T}_2|$.

For each partial-$T_i$ in $\mathcal{T}_1$, its partial-$R_i$ cannot be removed as a \pc{} because, by condition (1) its partial-$R_i$ is not a \pc{}. Therefore, the maximum possible size of the \pc{} removed from $T_i$ is $s/2-1$ (the left subtree of $R_i$). This means that the size of the remaining portion of each partial-$T_i$ in $\mathcal{T}_1$ is at least $s-1+a_i - 6k - (s/2-1) = s/2+a_i-6k$.

Therefore, $T'$ still has $|\mathcal{T}_1|$-many partial-$T_i$'s, each of size at last $s/2+a_i-6k$, and no \pc{} is contained in any of these partial-$T_i$'s; otherwise they would belong to $\mathcal{T}_2$. Since there are $|\mathcal{T}_1|$-many \pcs{} left in $T'$, each \pc{} must contain exactly one partial-$T_i$ from $\mathcal{T}_1$ because leaving any two of them connected would result in a \pc{} of size more than $s$. One of these $|\mathcal{T}_1|$-many \pcs{} must also contain a partial-$T_{i_0}$ of size $a_{i_0}+1$ by condition (3). This \pc{}'s size will be at least $s/2+a_i-6k + (a_{i_0} + 1-6k) > s$ because, by Assumption~\ref{assumption:ass}, $a_i \geq s/4+6k$, for $i \in [3k]$.
\qed
\end{proof}

\begin{lemma} \label{lem:2}
For $i \in [3k]$, $R_i$ is not a \pc{} in $C_P$.
\end{lemma}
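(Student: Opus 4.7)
Suppose, for contradiction, that $R_{i_0}$ is a \pc{} for some $i_0 \in [3k]$. The plan is to choose the set $C'_P$ defining partial-$T$ so that all three conditions of Lemma~\ref{lem:1} hold simultaneously, which would contradict that lemma.

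I would place $R_{i_0}$ into $C'_P$ (legal, since $R_{i_0} \subseteq T_{i_0}$), and then carefully add further \pcs{} strictly contained in other $T_j$'s, each of size at most $6k$, until the cardinality $|C'_P|$ matches $6k - 2m$, where $m = |\mathcal{T}|$ is defined by condition~(1) of Lemma~\ref{lem:1}. With this choice, partial-$T_{i_0}$ consists of $\{\mathrm{root}(T_{i_0})\} \cup L_{i_0}$ and has size exactly $a_{i_0}+1$; since $s > 24k$ by Assumption~\ref{assumption:ass}, this is strictly less than $s-1+a_{i_0}-6k$, so partial-$T_{i_0} \notin \mathcal{T}$, while $a_{i_0}+1 \geq a_{i_0}+1-6k$ yields condition~(3) with this $i_0$. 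Condition~(1) holds by definition of $m$. The remaining step is condition~(2): the number of \pcs{} contained in partial-$T$ equals $|C_P|-|C'_P| = 6k - |C'_P|$, which I would show equals $2m$ by a cut-budget argument combining (i)~the global budget of $6k-1$ cuts, (ii)~the requirement of at least one internal cut per $T_j$, and (iii)~a pairing of each nice partial-$T_j \in \mathcal{T}$ with two \pcs{} of partial-$T$ (one attached to $H$ via the root of $T_j$ and one carved from within $R_j$).

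The main obstacle I anticipate is arranging $C'_P$ so that condition~(2) is met with the correct count and parity; this requires careful bookkeeping between the \pcs{} removed inside each $T_j$ and the \pcs{} produced by cuts within the path $H$. A secondary subtlety is the corner case in which the connecting edge $(p_{i_0}, \mathrm{root}(T_{i_0}))$ is itself cut, so that $\{\mathrm{root}(T_{i_0})\} \cup L_{i_0}$ may become a \pc{} on its own; one then selects a different index $i_0' \neq i_0$ (for instance any $j'$ such that partial-$T_{j'}$ is small enough to sit outside $\mathcal{T}$ yet large enough to satisfy the size bound of condition~(3)) to play the role of $i_0$ in condition~(3), and applies Lemma~\ref{lem:1} analogously. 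Once condition~(2) is secured, Lemma~\ref{lem:1} delivers the desired contradiction, so no $R_i$ can be a \pc{} in $C_P$.
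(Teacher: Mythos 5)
Your overall strategy (build a partial-$T$ satisfying the three conditions of Lemma~\ref{lem:1}) is the same as the paper's, but the step you yourself flag as the ``main obstacle''---getting condition~(2) to come out to exactly $2m$---is precisely the heart of the proof, and your sketch does not close it. The paper's key idea, which is absent from your proposal, is that the group sizes force companions for the removed $R_i$'s: since $|R_i|=s-2$ and every group has size exactly $s$, each $R_i$ that is a \pc{} must share its group with at least one further \pc{} of size at most $2$ (a ``tiny'' \pc{}). Removing each of the $h\geq 1$ such $R_i$'s together with one tiny companion removes exactly $2h$ \pcs{}, leaving $6k-2h$ \pcs{}, while at the same time each surviving partial-$T_i$ loses at most $6k$ vertices (at most $h\le 3k$ tiny \pcs{} of size $\le 2$), so $m=3k-h$ and the count $2m=6k-2h$ matches automatically. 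Your proposal replaces this with a plan to pad $C'_P$ with ``further \pcs{} strictly contained in other $T_j$'s, each of size at most $6k$, until $|C'_P|=6k-2m$,'' but such small \pcs{} need not exist at all: the solution may cut each other $T_j$ exactly once, e.g.\ at the edge above $L_j$, so that the only \pc{} inside $T_j$ is $L_j$, of size $a_j>s/4\gg 6k$. Your construction then cannot be carried out, whereas the paper's companions are guaranteed by the group-size constraint rather than assumed to lie inside the $T_j$'s.

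There are two further problems with the bookkeeping you outline. First, it is circular: $m$ is determined by which partial-$T_j$'s survive with size at least $s-1+a_j-6k$ (and with partial-$R_j$ not a \pc{}), which depends on the very set $C'_P$ you are still in the process of choosing, so ``add \pcs{} until $|C'_P|=6k-2m$'' is not a well-defined procedure. Second, the pairing you propose in item~(iii)---each partial-$T_j\in\mathcal{T}$ paired with one \pc{} attached to $H$ via the root of $T_j$ and one ``carved from within $R_j$''---fails in general: a $T_j\in\mathcal{T}$ may contain a single \pc{} lying inside $L_j$, with its partial-$R_j$ hanging off $H$ and no \pc{} carved from $R_j$ at all, and the \pcs{} meeting $H$ are shared among several $T_j$'s rather than attributable one per gadget. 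So condition~(2), as you set things up, is not merely unverified; the route you describe for verifying it would break on legitimate candidate solutions. To repair the argument you need the tiny-companion observation (or an equivalent counting device tied to $|R_i|=s-2$), at which point the proof essentially becomes the paper's.
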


\begin{proof}
Suppose that $C_P$ contains $h$-many $R_i$'s as \pcs{}, where $h\geq 1$.  Since $|R_i| = s-2$, for each such $R_i$, there must exist at least one other \pc{} $X_i$ in $C_P$, where $|X_i| \leq 2$, that is assigned to the same group as $R_i$; call such an $X_i$ a {\em tiny \pc{}}. Let $T'$ be the partial-$T$ resulting from $T$ after all such $R_i$'s and $X_i$'s have been removed from $T$. We will show that $T'$ satisfies all three conditions of Lemma~\ref{lem:1}, which implies that at least one \pc{} in $C_P$ would have size greater than $s$, thus deriving a contradiction.

First, for each $T_i$ whose $R_i$ is not removed above, at most $h$-many tiny \pcs{} could be removed from $T_i$. Since $h \leq 3k$ and each tiny \pc{} has size at most 2, the resulting partial-$T_i$ in $T'$ has size at least $s-1+a_i - 6k$. Therefore, there are $(3k-h)$-many partial-$T_i$'s satisfying condition (1) of Lemma~\ref{lem:1}. Second, $T'$ contains $6k-2h$ \pcs{} satisfying condition (2) of Lemma~\ref{lem:1}, because $h$-many $R_i$'s and $h$-many $X_i$'s are removed from $T$ to produce $T'$. Finally, for each $T_i$ whose $R_i$ is removed above, the resulting partial-$T_i$ in $T'$ has size at least $a_{i}+1-6k$, because in addition to $R_i$, at most $h$-many tiny \pcs{}, each of size at most 2, could be removed from $T_i$ and $h \leq 3k$. Since $h \geq 1$, $T'$ satisfies condition (3) of Lemma~\ref{lem:1}. \qed
\end{proof}

\begin{lemma}\label{lem:3}
For $i \in [3k]$, $T_i$ does not contain a lowest \pc{} of size less than $s/4$.
\end{lemma}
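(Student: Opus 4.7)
The plan is to prove Lemma~\ref{lem:3} by contradiction, in the same spirit as Lemma~\ref{lem:2}: from the hypothetical small lowest $P$-component we build a partial-$T'$ and apply Lemma~\ref{lem:1} to force a $P$-component of size exceeding $s$. Assume some $T_{i_0}$ has a lowest $P$-component $C$ with $|C| < s/4$. First, I would use Assumption~\ref{assumption:ass} to rule out the possibility that $C$ coincides with any of the canonical subtrees $L_{i_0}$, $R_{i_0}$, $R^l_{i_0}$, or $R^r_{i_0}$, whose sizes are all at least $s/4$; together with the fact that $C$ is the subtree strictly below some cut edge of $T_{i_0}$ and therefore does not contain the root of $T_{i_0}$, this forces $C$ to be a proper subtree strictly inside $L_{i_0}$ or $R_{i_0}$, so at least one cut of $P$ is made inside one of these nice binary subtrees.

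Next, I would exploit the grouping constraint: since each group has total size $s$ and $|C| < s/4$, the other $P$-components in $C$'s group (its partners) must sum to strictly more than $3s/4$, and by Lemma~\ref{lem:2} none of them is a full $R_j$. I would then construct a partial-$T'$ by removing $C$ together with all its partners and, in parallel to Lemma~\ref{lem:2}, by also pairing off any $R_j$'s that happen to be $P$-components with their tiny group-partners and removing these pairs as well. For this $T'$ I would verify the three conditions of Lemma~\ref{lem:1}: that the set $\mathcal{T}$ of surviving partial-$T_i$'s whose partial-$R_i$ is not a $P$-component has some cardinality $m$; that $T'$ contains exactly $2m$ $P$-components; and that at least one disturbed partial-$T_{i_0}$ still has size at least $a_{i_0} + 1 - 6k$. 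The slack provided by Assumption~\ref{assumption:ass}, namely $s > 24k$ and $a_i \geq s/4 + 6k$, is precisely what allows these inequalities to survive the removals, whose total mass is bounded by $s$ plus at most a handful of tiny partners.

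The main obstacle I anticipate is the case analysis on where $C$'s partners live, since a partner can be a proper subtree of any $T_j$ (with $j$ possibly equal to $i_0$) or can straddle a portion of the spine $H$ together with the roots of some $T_j$'s. Careful bookkeeping will be required to ensure that removing the partners does not simultaneously break condition~(1), by pulling too much mass out of a surviving partial-$T_j$, and condition~(2), by leaving the wrong number of $P$-components, while still leaving a disturbed partial-$T_{i_0}$ large enough for condition~(3). Once all three inequalities are verified, Lemma~\ref{lem:1} yields a $P$-component of size strictly greater than $s$, contradicting the assumption that every group has total size exactly $s$, and thereby completing the proof.
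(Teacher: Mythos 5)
There is a genuine gap, and it lies exactly where you anticipate trouble. The paper's proof never touches the group that contains the small component $C$. Instead it works entirely inside the single tree $T_{i_0}$: after removing $C_1=C$ (of size less than $s/4$), the remaining partial-$T_{i_0}$ still has size at least $3s/4+a_{i_0}>s$, so it must contain another lowest \pc{} $C_2$; by Lemma~\ref{lem:2}, $C_2\neq R_{i_0}$, so $|C_2|$ is at most $s/2-1$ (if $C_1\subseteq L_{i_0}$) or at most $s-2-|C_1|$ (if $C_1\subseteq R_{i_0}$), and in either case the partial-$T_{i_0}$ left after removing $C_1$ and $C_2$ has size at least $a_{i_0}+1$. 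Removing only these two \pcs{} leaves the other $3k-1$ trees untouched, so the hypotheses of Lemma~\ref{lem:1} hold immediately with $m=3k-1$: condition (1) for the untouched trees, condition (2) because exactly $6k-2=2m$ \pcs{} remain, and condition (3) for the disturbed partial-$T_{i_0}$. This locality is the missing idea in your plan.

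Your route of removing $C$ together with all its group partners does not go through as described. First, a partner may contain vertices of the spine $H$, and then the result is not a partial-$T$ at all (a partial-$T$ is defined only by removing \pcs{} each contained in some $T_i$), so Lemma~\ref{lem:1} is not applicable. Second, a partner of size close to $s$ lying inside some other $T_j$ would reduce that partial-$T_j$ below $s-1+a_j-6k$, expelling it from $\mathcal{T}$ and simultaneously changing the number of remaining \pcs{}, so there is no reason the rigid count ``$2m$ \pcs{}'' in condition (2) is met; you give no mechanism to restore it. (Also, pairing off $R_j$'s that are \pcs{} with tiny partners is vacuous here, since Lemma~\ref{lem:2} already guarantees no $R_j$ is a \pc{}.) In short, the central bookkeeping you flag as ``requiring care'' is precisely the content of the proof, and the paper avoids it altogether by the two-component, single-tree removal argument; as written, your proposal does not constitute a proof.
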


\begin{proof}
Suppose that a $T_i$ contains a lowest \pc{} $C_1$ of size less than $s/4$. Remove $C_1$ from $T_i$. The resulting partial-$T_i$ has size at least $s-1+a_i-(s/4-1) = 3s/4+a_i > s$. Therefore, there is at least one more \pc{} contained in the partial-$T_i$. Remove another lowest \pc{} $C_2$ from this partial-$T_i$. If $C_1$ is in $L_i$, then sine $C_2 \neq R_i$ by Lemma~\ref{lem:2}, the largest \pc{} $C_2$ can be is the left subtree of $R_i$ of size $s/2-1$. If $C_1$ is contained in $R_i$, then the largest \pc{} $C_2$ can be is the remaining portion of $R_i$, which has size $s-2-|C_1|$. In either case, the partial-$T_i$ resulting after removing $C_1$ and $C_2$ has size at least $a_i+1$. Let $T'$ be the partial-$T$ after $C_1$ and $C_2$ are removed.

It is clear that (1) $T'$ has $(3k-1)$-many $T_i$'s of size at least $s-1+a_i$ and by Lemma~\ref{lem:2} the $R_i$ in each of these $T_i$'s is not a \pc{}; (2) $T'$ contains $6k-2$ \pcs{} (because $2$ \pcs{} are removed from $T$ to produce $T'$), and (3) $T'$ has a partial-$T_{i_0}$ of size at least $a_{i_0}+1$. By Lemma~\ref{lem:1}, at least one component in $C_P$ has size greater than $s$---a contradiction.
\qed
\end{proof}
\fi

\ifshort
\begin{lemma} \label{lem:23}
For $i \in [3k]$, $R_i$ is not a \pc{} in $C_P$ and $T_i$ does not contain a lowest \pc{} of size less than $s/4$.
\end{lemma}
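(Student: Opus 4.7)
The plan is to argue by contradiction, using an auxiliary size-counting lemma that I will call $(\star)$: if a partial-$T$ $T'$ contains a family $\mathcal{T}$ of $m$ partial-$T_i$'s each of size at least $s - 1 + a_i - 6k$ whose partial-$R_i$ is not a \pc{}, has at most $2m$ \pcs{} in total, and additionally contains a distinguished partial-$T_{i_0} \notin \mathcal{T}$ of size at least $a_{i_0} + 1 - 6k$, then some \pc{} of the solution must have size strictly greater than $s$, contradicting that each group has sum exactly $s$. The proof of $(\star)$ is a pigeonhole argument on the \pcs{} remaining in $T'$: each partial-$T_i$ in $\mathcal{T}$ must absorb at least one \pc{} because its size exceeds $s$, and the partial-$T_i$'s that absorb exactly one must leave residual mass of size at least $s/2 + a_i - 6k$ (since the largest extractable \pc{} inside, without touching the forbidden $R_i$, is $R_i^l$ or $R_i^r$, each of size $s/2 - 1$). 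Combining such a residue with the mass of $T_{i_0}$ through the bound $a_i \geq s/4 + 6k$ from Assumption~\ref{assumption:ass} produces a \pc{} of size strictly greater than $s$.

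With $(\star)$ in hand, I first prove that no $R_i$ is a \pc. Suppose $h \geq 1$ of the $R_i$'s are \pcs; since $|R_i| = s - 2$, each must share its group with \pcs{} of total size exactly $2$, which I call \emph{tiny} \pcs. Removing all $h$ offending $R_i$'s together with all tiny \pcs{} from $T$ yields a partial-$T$ $T'$. Using $h \leq 3k$ and the size-at-most-two of tiny \pcs, I would verify the three hypotheses of $(\star)$: the $3k - h$ untouched partial-$T_j$'s supply $\mathcal{T}$ with $m = 3k - h$; the count of remaining \pcs{} is $6k - 2h = 2m$; and any of the $h$ partial-$T_i$'s from which $R_i$ was removed still has residual size at least $a_i + 1 - 6k$, furnishing $T_{i_0}$. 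Applying $(\star)$ then gives the contradiction.

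For the second half I assume the $R_i$-claim and suppose some $T_i$ has a lowest \pc{} $C_1$ with $|C_1| < s/4$. Removing $C_1$ leaves a partial-$T_i$ of size at least $3s/4 + a_i > s$, forcing another lowest \pc{} $C_2$ inside it. A short case split on whether $C_1 \subseteq L_i$ or $C_1 \subseteq R_i$ then shows that after removing both $C_1$ and $C_2$ the partial-$T_i$ still has size at least $a_i + 1$. The resulting partial-$T$ contains $3k - 1$ untouched partial-$T_j$'s (each with $R_j$ not a \pc{} by the first half), exactly $6k - 2 = 2(3k - 1)$ remaining \pcs, and the distinguished partial-$T_i$ of size at least $a_i + 1 \geq a_i + 1 - 6k$; applying $(\star)$ again yields the desired contradiction.

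The main obstacle is proving $(\star)$ cleanly. The subtlety is that one has to simultaneously reason about partial-$T_j$'s in $\mathcal{T}$ that absorb exactly one \pc{} versus two or more: the single-absorbers each leave a residue of size at least $s/2 + a_j - 6k$ (because the prohibition on $R_j$ being a \pc{} caps the extractable size at $s/2 - 1$), while multi-absorbers deplete the budget of $2m$ available \pcs. Consequently the remaining \pcs{} must cover the single-absorbers one-for-one, and adding the $T_{i_0}$ mass then overflows a group. Ensuring that the arithmetic closes under the preprocessing bound $a_j \geq s/4 + 6k$ is precisely where Assumption~\ref{assumption:ass} is essential.
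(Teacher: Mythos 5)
Your proposal is correct and follows the paper's own route essentially verbatim: your auxiliary lemma $(\star)$ is exactly the paper's counting lemma on partial-$T$'s, and your two applications (deleting the $h$ offending $R_i$'s together with their tiny companion components, then deleting the two lowest components $C_1,C_2$) coincide with the paper's proofs of the two halves, down to the same bounds $s-1+a_i-6k$, $s/2+a_i-6k$, and $a_i+1-6k$ and the same use of $a_i\ge s/4+6k$ from Assumption~\ref{assumption:ass}. The only deviations---stating $(\star)$ with ``at most $2m$'' components and removing all tiny components rather than one per $R_i$---are cosmetic and do not change the argument.
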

\fi

\ifshort
\begin{lemma}
\label{lem:4}
For $i \in [3k]$, $L_i$ is the only \pc{} contained in $T_i$, and the subtree of $T$ induced by $(V(T_i)-V(L_i)) \cup \{p_i\}$ is a \pc{} of size $s$.
\end{lemma}

\begin{proof}
Since $|T_i| > s$, any $T_i$ must contain at least one \pc{}. Since $C_P$ has $6k$ \pcs{}, at least one of the $3k$ $T_i$'s contains at most one \pc{}, because otherwise the \pcs{} containing vertices in $H$ are not accounted for. Therefore, at least one $T_i$ contains exactly one \pc{} $C$, which must be a lowest \pc{} in $T_i$. By Lemma~\ref{lem:23}, $C \neq R_i$ and $|C| \geq s/4$, and hence $C$ cannot be any proper subtree of $L_i$, $R^l_i$, or $R^l_i$. This leaves $L_i$, $R^l_i$, and $R^l_i$ as the only possible choices for $C$.

Suppose that $C = R^l_i$. After removing $C$, the partial-$T_i$, denoted $T^-_i$, has size $s-1+a_i-(s/2-1)=s/2 + a_i$, and contains no \pcs{}. Let $D$ be the set of vertices that are not in $T^-_i$, and are in the same group as $T^-_i$. Observe that for any $j\neq i, j \in [3k]$, if a  vertex in $L_j$ is in $D$ then all vertices in $L_j$ are in $D$. This is true because, by Lemma~\ref{lem:23}, {\em all} vertices in $L_j$ belong to the same \pc{}; otherwise $L_j$ would have a lowest \pc{} of size less than $s/4$. This means that the \pc{} containing $T^-_i$ has size $|T^-_i|+|D| \geq s/2 + a_i + a_j > s$. Therefore, $D$ does not include any vertex in $L_j$. Similarly, $D$ does not include any vertex in $R^l_j$ or $R^r_j$. It follows that $D$ consists only of vertices in $H$, the roots of the $T_i$'s, and the roots of the $R_i$'s, $i \in [3k]$. However, there are only $9k$ such vertices, which means $|D| \leq 9k$ and hence the \pc{} containing $T^-_i$ has size $|T^-_i|+|D| \leq s/2 + a_i + 9k < s$. The last inequality is true because $a_i \leq s/2-18k$, for $i \in [k]$.

Therefore, $C \neq R^l_i$. By a similar argument, $C \neq R^r_i$. It follows that $C = L_i$. After $L_i$ is removed, the resulting partial-$T_i$ along with $p_i$ in $H$ induces a subtree $C_i$ of size exactly $s$, and hence must be a \pc{} by itself.

After both $L_i$ and $C_i$ are removed, there are $(3k-1)$-many $T_i$'s and $6k-2$ \pcs{} remaining in $T$. Thus, there is at least one $T_j$ containing exactly one \pc{}. By the same argument above, the only \pc{} contained in $T_j$ is $L_j$. Repeating this argument $3k$ times in total proves the lemma.
\qed
\end{proof}
\fi

\iflong\begin{lemma}
\label{lem:4}
For $i \in [3k]$, $L_i$ is the only \pc{} contained in $T_i$, and the subtree of $T$ induced by $(V(T_i)-V(L_i)) \cup \{p_i\}$ is a \pc{} of size $s$.
\end{lemma}

\begin{proof}
Since $|T_i| > s$, any $T_i$ must contain at least one \pc{}. Since $C_P$ has $6k$ \pcs{}, at least one of the $3k$ $T_i$'s contains at most one \pc{}, because otherwise the \pcs{} containing vertices in $H$ are not accounted for. Therefore, at least one $T_i$ contains exactly one \pc{} $C$, which must be a lowest \pc{} in $T_i$. By Lemma~\ref{lem:2}, $C \neq R_i$. By Lemma~\ref{lem:3}, $|C| \geq s/4$, and hence $C$ cannot be any proper subtree of $L_i$, $R^l_i$, or $R^l_i$. This leaves $L_i$, $R^l_i$, and $R^l_i$ as the only possible choices for $C$.

Suppose that $C = R^l_i$. After removing $C$, the partial-$T_i$, denoted $T^-_i$, has size $s-1+a_i-(s/2-1)=s/2 + a_i$, and contains no \pcs{}. Let $D$ be the set of vertices that are not in $T^-_i$, and are grouped together with $T^-_i$. Observe that for any $j\neq i, j \in [3k]$, if a  vertex in $L_j$ is in $D$ then all vertices in $L_j$ are in $D$. This is true because, by Lemma~\ref{lem:3}, {\em all} vertices in $L_j$ belong to the same \pc{}; otherwise $L_j$ would have a lowest \pc{} of size less than $s/4$. This means that the \pc{} containing $T^-_i$ has size $|T^-_i|+|D| \geq s/2 + a_i + a_j > s$. Therefore, $D$ does not include any vertex in $L_j$. similarly, $D$ does not include any vertex in $R^l_j$ or $R^r_j$. It follows that $D$ consists only of vertices in $H$, the roots of the $T_i$'s, and the roots of the $R_i$'s, $i \in [3k]$. However, there are only $9k$ such vertices, which means that the \pc{} containing $T^-_i$ has size $|T^-_i|+|D| \leq s/2 + a_i + 9k < s$. The last inequality is true because, by Assumption~\ref{assumption:ass}, $a_i \leq s/2-18k$ (assuming, without loss of generality, that $k >0$), for $i \in [k]$.

Therefore, $C \neq R^l_i$. By a similar argument, $C \neq R^r_i$. It follows that $C = L_i$. After $L_i$ is removed, the resulting partial-$T_i$ along with $p_i$ in $H$ induces a subtree $C_i$ of size exactly $s$, and hence must be a \pc{} by itself.

After both $L_i$ and $C_i$ are removed, there are $(3k-1)$-many $T_i$'s and $6k-2$ \pcs{} remaining in $T$. Thus, there is at least one $T_j$ containing exactly one \pc{}. By the same argument above, the only \pc{} contained in $T_j$ is $L_j$. Repeating this argument $3k$ times in total proves the lemma.
\qed
\end{proof}
\fi

\ifshort
Lemma~\ref{lem:4} shows that, in a solution $P$ of $(T, 6k-1, 4k)$, each of $L_1, \ldots, L_{3k}$ is a \pc{}, and the remaining part of each of the $3k$ $T_i$'s is a \pc{} whose size is $s$.
Based on this, the theorem below easily follows:
\fi
\begin{theorem} \label{thm:nphardness} \threebtp{} is~\NP-complete.
\end{theorem}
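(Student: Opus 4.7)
The plan is to note that \threebtp{} lies in \NP{} (one can verify in polynomial time that a given set of $k$ edges, together with a group assignment, produces components of the required sizes), and then to close out the polynomial-time reduction from {\sc 3-Partition} that the excerpt has already set up. The instance $I = (T, 6k-1, b=4k)$ is clearly constructed in polynomial time, and a quick degree count (each $p_i$ has at most two neighbours on $H$ plus the edge to the root of $T_i$; the root of each $T_i$ has a parent and two children; internal vertices of the nice binary trees have at most two children plus one parent) confirms that $T$ has maximum degree at most 3, so the reduction indeed targets \threebtp{}.

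For the forward direction, suppose $(S,s)$ admits a 3-partition $S_1,\ldots,S_k$. I would exhibit a solution to $I$ that cuts exactly two kinds of edges: all $3k-1$ edges of the path $H$, and, for each $i \in [3k]$, the edge joining the root of $T_i$ to the root of $L_i$, for a total of $6k-1$ cuts. The resulting components are the $3k$ trees $L_i$ (of sizes $a_i$) and the $3k$ trees obtained by attaching $p_i$ to $T_i - L_i$, each of size $1 + (s-2) + 1 = s$. I place each size-$s$ component in its own group, and for each $j \in [k]$ collect the three $L_i$'s whose indices lie in $S_j$ into a single group; since $\sum_{i \in S_j} a_i = s$, this yields $3k + k = 4k = b$ groups, each of size exactly $s$.

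For the backward direction, let $P$ be any solution of $I$. By Lemma~\ref{lem:4}, for every $i \in [3k]$ the subtree $L_i$ is a \pc{} and $(V(T_i) \setminus V(L_i)) \cup \{p_i\}$ is a \pc{} of size $s$. Isolating each $p_i$ together with $T_i - L_i$ forces every edge of $H$ to be cut, and isolating each $L_i$ forces the $3k$ edges between roots of $T_i$ and $L_i$ to be cut; these account for all $6k-1$ cuts. The $3k$ \pcs{} containing the $p_i$'s fill $3k$ of the $4k$ groups, so the remaining $k$ groups must partition $\{L_1,\ldots,L_{3k}\}$ into parts whose sizes sum to $s$ each. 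Since $s/4 < a_i < s/2$, no such part can contain fewer than three or more than three $L_i$'s, so each part has cardinality exactly three; reading off the indices yields the desired 3-partition of $S$.

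The main obstacle has already been absorbed by the prior lemmas: the delicate case analysis is precisely what rules out cutting out an $R_i$ or any small subtree as a \pc{}. Once Lemma~\ref{lem:4} is available, only routine bookkeeping on cut counts, component sizes, and group sizes remains, together with the cardinality-three observation following from the strict inequalities on the $a_i$'s.
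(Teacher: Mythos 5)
Your proof is correct and follows essentially the same route as the paper: the forward direction cuts the $3k-1$ path edges plus the $3k$ edges above the $L_i$'s, and the converse invokes Lemma~\ref{lem:4} together with the strict bounds $s/4 < a_i < s/2$ to force exactly three $L_i$'s into each of the remaining $k$ groups. The only point to make explicit is that the reduction is polynomial-time because {\sc 3-Partition} is \NP-hard in the strong sense (so $s$ may be assumed polynomially bounded in the instance length, making $|T| = 4k\cdot s$ polynomial), which the paper states when arguing computability of the reduction.
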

\iflong
\begin{proof}
It is easy to see that \threebtp{} is in \NP.

To show \NP-hardness, we reduce from {\sc 3-Partition}. It is well known that {\sc 3-Partition} is \NP-hard in the strong sense~\cite{gj}. So we can restrict our attention to the instances of {\sc 3-Partition} in which the numbers in the instance are all bounded by a polynomial in the instance length.
Consider the reduction from {\sc 3-Partition} to \threebtp{} that maps each instance $(S=\langle a_1, \ldots, a_{3k} \rangle, s)$ of {\sc 3-Partition} to the instance $(T, 6k-1, b=4k)$. This reduction is computable in polynomial time.

Clearly, if  $(S=\langle a_1, \ldots, a_{3k} \rangle, s)$ is a yes-instance of {\sc 3-Partition}  then we can construct a solution $P$ of $(T, 6k-1, 4k)$ by cutting each $L_i$, $i \in [3k]$, plus the $3k-1$ edges of the path $H$, for a total of $6k-1$ edges. A grouping of the resulting \pcs{} follows trivially from the 3-partitioning of the $a_i$'s. To prove the converse, let $P$ be a solution to the instance $(T, 6k-1, 4k)$. By Lemma~\ref{lem:4}, $C_P$ consists of $\{L_1, \ldots, L_{3k}\}$ and $3k$ additional \pcs{}, each of size $s$. Since each group has size $s$, each of the $3k$ \pcs{} must occupy a group by itself, and the remaining $L_1, \ldots, L_{3k}$ components of sizes $a_1, \ldots, a_{3k}$, respectively, must be assigned to the remaining $k$ groups. Since each number in
$a_1, \ldots, a_{3k}$ satisfies $s/4 < a_i < s/2$, each group is assigned exactly three numbers, and
$\{a_1, \ldots, a_{3k}\}$ is a yes-instance of {\sc 3-Partition}. This completes the proof.
\qed
\end{proof}
\fi
\iflong
\section{\gtp~and \btp~are $W[1]$-complete}
\label{sec:whardness}
\fi
\ifshort
\section{$W[1]$-completeness}
\label{sec:whardness}
\fi
\iflong
In this section, we show that \gtp~and \btp~are $W[1]$-complete. To show their membership in $W[1]$, we use a characterization of the class $W[1]$ given in~\cite{yijia}. We start by showing that \gtp~is $W[1]$-hard. We then show that \gtp~and \btp~are equivalent modulo \fpt-reducibility, which implies the $W[1]$-hardness of \btp.   \\

To show that \gtp~is $W[1]$-hard, we provide an \fpt-reduction from the $W[1]$-complete  \textsc{$k$-Multi-Colored Clique ($k$-MCC)} problem (\cite{fptbook,viallete}) defined as follows: Given a graph $M = (V(M),E(M))$ and a proper $k$-coloring of the
vertices $f : V(M) \longrightarrow C$, where $C = \{1,2,..., k\}$ and each color class has the same cardinality, decide whether there exists a clique $Q \subseteq V(M)$ of size $k$ such that, $\forall u, v \in Q$, $f(u) \ne f(v)$. For $i \in [k]$, we define $C_i =\{v \in M \mid f(v)=i\}$ to be the color class  consisting of all vertices whose color is $i$.
\fi
\ifshort
To show that \gtp~is $W[1]$-hard (membership in $W[1]$ is shown using a characterization of the class $W[1]$ given in~\cite{yijia}), we provide a fixed-parameter tractable reduction (\fpt-reduction) from the $W[1]$-complete  \textsc{$k$-Multi-Colored Clique ($k$-MCC)} problem (\cite{fptbook,viallete}) defined as follows: Given a graph $M = (V(M),E(M))$ and a proper $k$-coloring of the
vertices $f : V(M) \longrightarrow C$, where $C = \{1,2,..., k\}$ and each color class has the same cardinality, decide whether there exists a clique $Q \subseteq V(M)$ of size $k$ such that, $\forall u, v \in Q$, $f(u) \ne f(v)$. For $i \in [k]$, we define $C_i =\{v \in M \mid f(v)=i\}$ to be the color class  consisting of all vertices whose color is $i$.
\fi
Let $n = |C_i|$, $i \in [k]$, and let $N=k\cdot n$. We label the vertices in $C_i$ arbitrarily as $v_{1}^{i}, \ldots, v_{n}^{i}$. We first introduce some terminologies.

For a finite set $X \subseteq \nat$ and $\ell \in \integ$, we say that $X$ is \emph{$\ell$-non-averaging} if for any $\ell$ numbers $x_1, \ldots, x_{\ell} \in X$, and for any number $x \in X$, the following holds: if $x_1 + \cdots +x_{\ell} = \ell \cdot x$ then $x_1 = \cdots = x_{\ell} =x$.

Let $X=\{x_1, \ldots, x_n\}$ be a $(k-1)$-non-averaging set. It is known that we can construct such a set $X$ such that each element $x_i \in X$, $i \in [n]$, is polynomial in $n$ (for instance, see~\cite{nonaveraging}). Jensen \etal~\cite{binpacking} showed that a $(k-1)$-non-averaging set of cardinality $n$, in which each number is at most $k^2n^2 \leq n^4$, can be constructed in polynomial time in $n$; we will assume that $X$ is such a set. Let $k'= k + {k \choose 2}$, and let $z=k'^2n^5$. Choose $2k$ numbers $b_1, \ldots, b_k, c_1, \ldots, c_k \in \nat$ such that $b_j=k'^{2j} \cdot z$ for $j \in [k]$, and $c_j =k'^{2(k+j)}\cdot z$ for $j \in [k]$.  Observe that each number in the sequence $b_1, \ldots, b_k, c_1, \ldots, c_k$ is equal to the preceding number multiplied by $k'^2$, and that the smallest number $b_1$ in this sequence is $k'^2 \cdot z \geq k'^4 n^5$. For each $j, j' \in [k], j < j'$, we choose a number $c_{j}^{j'}=c_k\cdot k'^{2((j-1)k -j(j-1)/2 +j'-j)}$. That is, each number in the sequence $c_{1}^{2}, \ldots, c_{1}^{k}, c_{2}^{3}, \ldots, c_{2}^{k}, \ldots, c_{k-1}^{k}$ is equal to the preceding one multiplied by $k'^2$, and the smallest number $c_{1}^{2}$ in this sequence is equal to $k'^2 \cdot c_k$.

We construct a tree $T$ rooted at a vertex $r$ as follows. For a vertex $v_{i}^{j}$, $i \in [n], j \in [k]$, we correspond a {\em vertex-gadget} (for vertex  $v_{i}^{j}$) that is a star $S_{v_{i}^{j}}$ with $c_j - (k-1)b_j - (k-1)x_i-1$ leaves, and hence with $c_j - (k-1)b_j - (k-1)x_i$ vertices; we label the root of the star $r_{v_{i}^{j}}$, and add the edge $rr_{v_{i}^{j}}$ to $T$. See Figure~\ref{fig:vertexgadget} for illustration. For each edge $e$ in $M$ between two vertices $v_{i}^{j}$ and $v_{p}^{q}$, $i,p \in [n], j, q \in [k], j < q$, we create two stars $S'_{v_{i}^{j}}$ and $S'_{v_{p}^{q}}$, with $b_j + x_i-1$ and $b_q+x_p-1$ leaves, respectively, and of roots $r'_{v_{i}^{j}}$ and $r'_{v_{p}^{q}}$, respectively. We introduce a star $S_e$ with root $r_e$ and $c_{j}^{q}-1$ leaves, and connect $r_e$ to $r'_{v_{i}^{j}}$ and $r'_{v_{p}^{q}}$ to form a tree $T_e$ with root $r_e$ that we call an {\em edge-gadget} (for edge $e$). We connect $r_e$ to $r$. See Figure~\ref{fig:edgegadget} for illustration. Note that the number of vertices in $T_e$ that are not in $S'_{v_{i}^{j}} \cup S'_{v_{p}^{q}}$ is exactly $c_{j}^{q}$. Finally, we create $k'+1$ copies of a star $S_{fix}$ consisting of $c_{k-1}^{k} +k'+1$ many vertices, and connect the root $r$ of $T$ to the root of each of these copies. This completes the construction of $T$. Let $t=|T|$. We define the reduction from \mcc~to \gtp~to be the map that takes an instance $I=(M, f)$ of \mcc~and produces the instance $I'=(T, k', b=k+{k \choose 2}, c_1, \ldots, c_k, c_{1}^{2}, \ldots, c_{1}^{k}, c_{2}^{3}, \ldots, c_{2}^{k} \ldots, c_{k-1}^{k}, t')$, where $k'=k+3{k \choose 2}$ and $t'=t - \sum_{j=1}^{k} c_j - \sum_{j, q \in [k], j < q}c_{j}^{q}$. Clearly, this reduction is an \fpt-reduction. Next, we describe the intuition behind this reduction.

\begin{figure}[htbp]
\centering
\begin{minipage}{.4\textwidth}
\begin{overpic}[clip, trim=7cm 8cm 5cm -1.6cm, width=1.00\textwidth, tics=10]{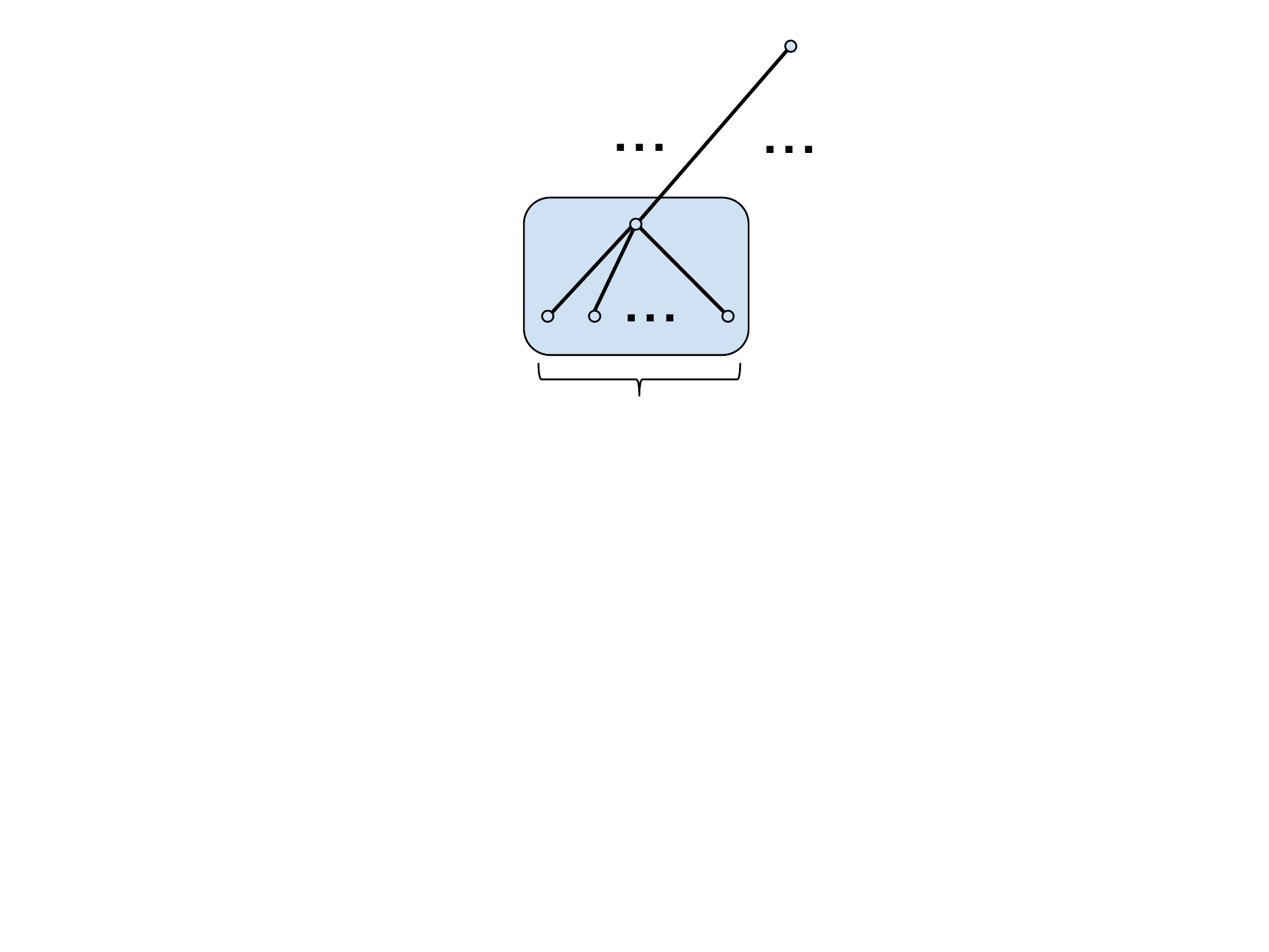}

 \put (61,75) {$r$}
 \put (15,48) {$S_{v_{i}^{j}}$}
 \put (5,-1) {$c_j-(k-1)b_j-(k-1)x_i-1$}
\end{overpic}
\caption{Illustration of the vertex-gadget for $v_{i}^{j}$.\label{fig:vertexgadget}}
\end{minipage}%
\hspace{0.2cm}
\begin{minipage}{.53\textwidth}
\begin{overpic}[clip, trim=5.5cm 5.5cm 4cm 0.6cm, width=1.00\textwidth,tics=10]{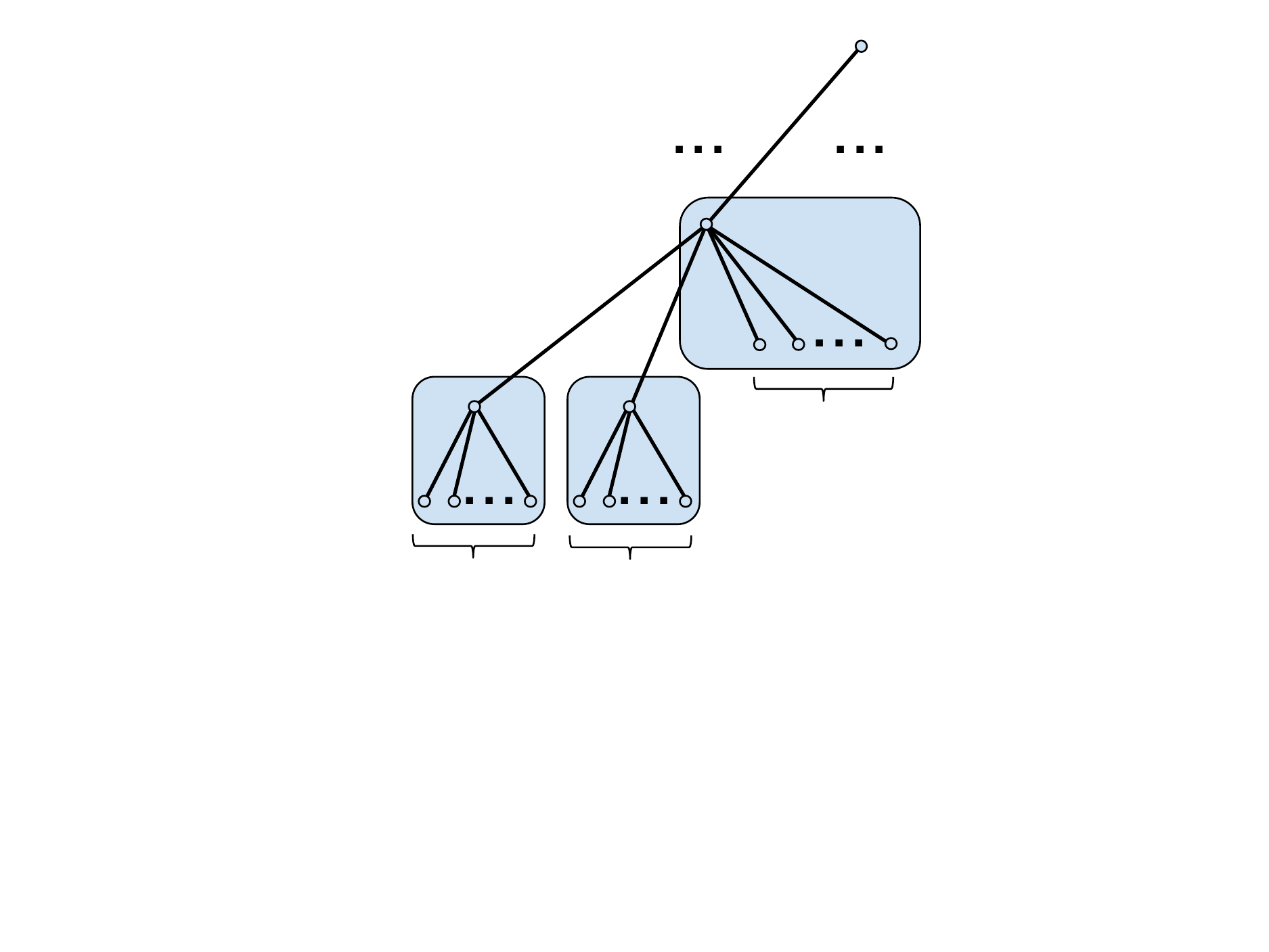}

 \put (78,88) {$r$}
  \put (67,64) {$S_e$}
   \put (59,55) {$r_e$}
  \put (62,25) {$c_{j}^{q}-1$}

  \put (13,39) {$S'_{v_{i}^{j}}$}
   \put (8,28) {$r'_{v_{i}^{j}}$}
    \put (32,28) {$r'_{v_{p}^{q}}$}
  \put (36,39) {$S'_{v_{p}^{q}}$}
  \put (4,0) {$b_j+x_i-1$}
  \put (33,0) {$b_q+x_p-1$}
\end{overpic}
\caption{Illustration of the edge-gadget for $e=v_{i}^{j}v_{p}^{q}$.\label{fig:edgegadget}}
\end{minipage}
\end{figure}

Each number $c_j$, $j \in [k]$, chosen above, will serve as a ``signature'' for class $C_j$, in the sense that it will ensure that in any solution to the instance, a vertex-gadget corresponding to a vertex in class $C_j$ is ``cut'' and placed in the group of size $c_j$. Each number $c_{j}^{j'}$,  $j, j' \in [k], j < j'$, will serve as a ``signature'' for the class-pair $(C_j, C_{j'})$, in the sense it will ensure that in a solution exactly one edge-gadget corresponding to an edge $e$ between classes $C_j$ and $C_j'$ is cut and the star $S_e$ is placed in the group whose size is $c_{j}^{j'}$. Each number $b_j$, $j \in [k]$, will serve as a ``signature'' for any edge such that one of its endpoints is in $C_j$ (\ie, a signature for an arbitrary vertex in $C_j$), ensuring that in a solution, $k-1$ of these edges are cut. Finally, the choice of the $x_i$'s, for $i \in [n]$, to be elements of a $(k-1)$-non-averaging set, will ensure that all the edges cut that are incident to vertices in the same class $C_j$, $j \in [k]$, are incident to the same vertex in $C_j$.

Next, we prove the correctness of the reduction. One direction is easy:

\ifshort
\begin{lemma}
\label{lem:togtp}
If $(M, f)$ is a yes-instance of \mcc~then $I'$ is a yes-instance of \gtp.
\end{lemma}
\fi

\iflong
\begin{lemma}
\label{lem:togtp}
If $(M, f)$ is a yes-instance of \mcc~then $I'$ is a yes-instance of \gtp.
\end{lemma}

\begin{proof}
If $(M, f)$ has a clique $Q$ of size $k$, then we can form a solution $P=(E_P, \f)$ for \gtp~as follows. For every vertex $v_{i}^{j} \in Q$, $E_P$ contains the edge between $r$ and the root $r_{v_{i}^{j}}$ of the vertex-gadget (star) $S_{v_{i}^{j}}$; $\f$ assigns the vertex-gadget $S_{v_{i}^{j}}$ to group $G_j$. For each edge $e=v_{i}^{j}v_{p}^{q}$, $j,q \in [k], j < q$, between two vertices in $Q$, $P$ cuts the edge between the root $r_e$ of $T_e$ and $r$, and cuts the two edges between $r_e$ and its two children $r'_{v_{i}^{j}}$ and $r'_{v_{p}^{q}}$; $\f$ assigns the star $S'_{v_{i}^{j}}$ to group $G_j$, the star $S'_{v_{p}^{q}}$ to group $G_q$, and the rest of $T_e$ (\ie, $S_e$) to group $G_{j}^{q}$. The remainder of $T$ is assigned to the last group $G_{rest}$. It is easy to verify that $P$ cuts exactly $k'$ edges in $T$. Moreover, the number of vertices assigned by $\f$ to each group $G_j$, $j \in [k]$, is exactly $c_j$; the number of vertices assigned to each group $G_{j}^{q}$, $j,q \in [k], j < q$, is exactly $c_{j}^{q}$; and the number of vertices assigned to $G_{rest}$ is exactly $t'=t - \sum_{j=1}^{k} c_j - \sum_{j, q \in [k], j < q}c_{j}^{q}$.  It follows that $P$ is a solution to $I'$.
\qed
\end{proof}

Next, we prove the converse. Let $P=(E_P, \f)$ be a solution to the instance $I'=(T, k', b=k+{k \choose 2}, c_1, \ldots, c_k, c_{1}^{2}, \ldots, c_{1}^{k}, c_{2}^{3}, \ldots, c_{2}^{k} \ldots, c_{k-1}^{k}, t')$ of \gtp. Let $G_j$, $j \in [k]$, denote the group of size $c_j$, $G_{j}^{q}$, $j, q \in [k], j < q$, denote the group of size $c_{j}^{q}$, and $G_{rest}$ denote the group of size $t'$. First, observe the following:

\begin{observation} \label{obs:root}
 If $P=(E_P, \f)$ is a solution for $I'$, then $\f$~assigns the root $r$ of $T$ to group $G_{rest}$.
\end{observation}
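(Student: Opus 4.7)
The plan is to combine a pigeonhole argument on the edges incident to $r$ with a size comparison among the prescribed group sizes. By construction, $r$ is joined by an edge to the root of each of the $k'+1$ copies of the star $S_{fix}$, contributing $k'+1$ edges of $T$ incident to $r$ via the fixed stars. Since $|E_P| = k'$, at least one of these $k'+1$ edges survives in $T - E_P$, so the connected component $C_r$ of $T - E_P$ containing $r$ also contains at least one entire copy of $S_{fix}$. In particular, $|C_r| \geq |S_{fix}| + 1 = c_{k-1}^{k} + k' + 2$.

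Next, I would verify that $c_{k-1}^{k}$ is the largest prescribed group size other than $t'$. From the definitions $c_j = k'^{2(k+j)} \cdot z$ and $c_{j}^{j'} = c_k \cdot k'^{2((j-1)k - j(j-1)/2 + j'-j)}$, the combined sequence $c_1, c_2, \ldots, c_k, c_{1}^{2}, c_{1}^{3}, \ldots, c_{k-1}^{k}$ is strictly increasing, since each successive entry equals the previous one multiplied by $k'^2$. Hence $c_{k-1}^{k}$ strictly dominates every $c_j$ and every other $c_{j}^{j'}$. Combining this with the lower bound on $|C_r|$ above gives $|C_r| > c_{k-1}^{k}$, so $C_r$ is too large to be assigned to any group $G_j$ (of size $c_j$) or any group $G_{j}^{q}$ (of size $c_{j}^{q}$). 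The only remaining group that $\f$ can assign $C_r$ to is $G_{rest}$, which yields the claim.

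I expect the only real care required is the arithmetic verification that $c_{k-1}^{k}$ is indeed the unique maximum of the non-$G_{rest}$ group sizes; this is the reason for choosing the $b_j$, $c_j$, and $c_{j}^{j'}$ as geometric progressions in $k'^2$. Once that bookkeeping is done, the pigeonhole step over the $k'+1$ copies of $S_{fix}$ and the one-line size argument close the observation immediately.
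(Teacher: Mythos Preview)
Your approach is essentially the same as the paper's: pigeonhole on the $k'+1$ copies of $S_{fix}$ against the at most $k'$ cuts, then compare $|C_r|$ to the maximum non-$G_{rest}$ group size $c_{k-1}^{k}$. One small wrinkle in your write-up: from ``at least one of these $k'+1$ edges survives'' you cannot directly conclude that an \emph{entire} copy of $S_{fix}$ lies in $C_r$, since internal star-edges of that surviving copy might themselves be cut. The paper sidesteps this by speaking only of a \emph{substar} of some $S_{fix}$ attached to $r$, which then has at least $|S_{fix}|-k' = c_{k-1}^{k}+1$ vertices. Your stronger claim is in fact also true, but it needs the pigeonhole applied to copies rather than to root-edges: each of the $k'+1$ copies would require at least one cut (root-edge or internal) to fail to lie entirely in $C_r$, and only $k'$ cuts are available. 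Either version yields $|C_r|>c_{k-1}^{k}$ and the argument closes as you wrote.
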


\begin{proof}
Since $P$ cuts at most $k'$ edges, at least one substar of a copy of the star $S_{fix}$ must be connected to $r$ in $T - E_P$. Since the number of vertices in $S_{fix}$ is $c_{k-1}^{k} +k'+1$, the substar of $S_{fix}$ that is connected to $r$ must have size greater that $c_{k-1}^{k}$, where $c_{k-1}^{k}$ is the size of the largest group other than $G_{rest}$. Therefore, this substar, and hence $r$, must be assigned to group $G_{rest}$ by $\f$. \qed
\end{proof}

\begin{lemma}
\label{lem:vertexgadget}
For each $j \in [k]$, $P$ cuts exactly one edge between the root $r$ of $T$ and the root of a vertex-gadget corresponding to a vertex from class $C_j$ in $M$. Moreover, $\f$ assigns this substar to group $G_j$.
\end{lemma}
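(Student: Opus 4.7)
By Observation~\ref{obs:root} the root $r$ lies in $G_{rest}$, so every component of $T - E_P$ placed in $G_j$ is a subtree detached from $r$ by cuts in $E_P$. My plan is to exploit the base-$k'^2$ place-value structure of the designated sizes: by construction each of $b_\ell, c_\ell$ and $c_\ell^{\ell'}$ equals $z$ times a distinct power of $k'^2$, while the non-averaging constants $x_i \leq n^4$ are tiny compared with $z = k'^2 n^5$. Consequently each such quantity has its own ``leading digit,'' and since cutting a star-edge shifts sizes by only $1$ per cut, no realistic number of cuts can move a sub-piece's leading digit into the position of a different designated size.

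I would first eliminate the obviously over-sized candidates for membership in $G_j$: each complete $S_{fix}$ copy has size $c_{k-1}^{k}+k'+1 > c_j$; each complete edge-gadget has size at least $c_1^2 > c_k \geq c_j$; and each complete vertex-gadget $S_{v_i^{j'}}$ with $j' > j$ exceeds $c_{j+1} - (k-1)(b_{j+1}+n^4) > c_j$. So each of these can contribute to $G_j$ only through a proper sub-piece obtained by cuts. Next, I would pinpoint a principal component $C^{\ast}$ of $G_j$ that carries the digit at position $k+j$: the total of $G_j$ equals $c_j$, and every other naturally sized piece (vertex-gadgets of classes $C_{j'}$ with $j' < j$, side-stars $S'_{v_i^{j'}}$ of size $b_{j'} + x_i$, isolated leaves, and so on) contributes only to positions strictly below $k+j$, so some component of $G_j$ must provide the unique position-$(k+j)$ digit and hence have size in the narrow window $[c_j - (k-1)(b_j+n^4),\, c_j]$. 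Within this window the only admissible subtree of $T$ is a complete vertex-gadget $S_{v_i^j}$ for some $v_i^j \in C_j$: a proper sub-star of such a gadget falls below the window by at least $1$; a near-complete sub-star of a larger gadget would require shaving off $\Omega(c_{j+1}-c_j)$ leaves one at a time, which blows the cut budget $k' = k + 3\binom{k}{2}$; and any piece that keeps a central star $S_e$ of an edge-gadget attached carries the wrong digit.

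Finally, uniqueness follows because $|S_{v_i^j}| = c_j - (k-1)(b_j + x_i) > c_j/2$, so $G_j$ cannot contain two complete class-$C_j$ vertex-gadgets. Since $S_{v_i^j}$ meets the rest of $T$ only through the edge $r\,r_{v_i^j}$, separating it from $r$ forces cutting precisely that edge, yielding the claim. \emph{The hardest step} is the digit/window argument identifying $C^{\ast}$: carefully ruling out faked sizes built from many small sub-star remainders will require a global accounting of cuts and digit contributions across all groups $G_1,\dots,G_k,G_1^2,\dots,G_{k-1}^k$ simultaneously, rather than arguing group-by-group, so that any would-be ``mimic'' component is exposed as either violating the total-size constraint on $G_{rest}$ or overspending the $k'$-cut budget.
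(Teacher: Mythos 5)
Your plan is the paper's own: use Observation~\ref{obs:root} to anchor $r$ in $G_{rest}$, eliminate over-sized candidates for $G_j$ via the factor-$k'^2$ scale separation, then argue that without a substar of some $S_{v_i^j}$ the group $G_j$ cannot reach $c_j$. The base-$k'^2$ ``digit'' framing is a tidy way to see the scale separation, but the comparisons beneath it are the same ones the paper makes.

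The step you flag as hardest---ruling out that $c_j$ is ``faked'' by many small sub-pieces---is resolved locally, not by the cross-group accounting you anticipate. Since $|E_P|\le k'$, the forest $T-E_P$ has at most $k'+1$ components, and after reserving the one containing $r$ for $G_{rest}$, at most $k'$ can land in $G_j$. Once the over-sized pieces are excluded, every remaining candidate (a substar of a vertex-gadget of color $<j$, a piece of a side star $S'_{v_*^*}$ of size at most $b_k+n^4$, or an isolated leaf) has size strictly less than $c_{j-1}$, so $|G_j| < k'\cdot c_{j-1} = c_j/k' < c_j$, a contradiction inside $G_j$ alone. (For $j=1$ the candidates are bounded by $b_k+n^4$ and $k'(b_k+n^4) < k'^2 b_k = c_1$.) No interaction between groups is needed.

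One overreach: you conclude already at this point that the distinguished component $C^{\ast}$ is a \emph{complete} gadget $S_{v_i^j}$, justified by claiming a proper substar ``falls below the window by at least $1$.'' That fails whenever $x_i < n^4$: $|S_{v_i^j}| = c_j-(k-1)(b_j+x_i)$ sits $(k-1)(n^4-x_i)$ above the window floor $c_j-(k-1)(b_j+n^4)$, so proper substars can still lie inside. This does not sink the lemma, which only asserts that a substar of $S_{v_i^j}$ is assigned to $G_j$; completeness is obtained only in Corollary~\ref{cor:exactassignment}, after the budget $k' = k+3\binom{k}{2}$ is tallied against the cuts forced by this lemma and Lemma~\ref{lem:edgegadgetexact}. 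The ``exactly one'' in the statement is also really pinned down at that corollary: your observation that two full class-$C_j$ gadgets cannot share $G_j$ does not by itself preclude $P$ from cutting both $r r_{v_{i_1}^j}$ and $r r_{v_{i_2}^j}$ and absorbing one detached gadget into $G_{rest}$.
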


\begin{proof}
Let $j \in [k]$. We first claim that one of the components in $T-E_P$ assigned to group $G_j$ by $\f$ must be a substar of a vertex-gadget $S_{v_{i}^{j}}$, for some $i \in [n]$. To prove the statement of the lemma, we show that the only way that the sizes of the components placed in $G_j$ could add up to $c_j$ is that if a substar of $S_{v_{i}^{j}}$ is assigned to $G_j$ by $\f$.

First, observe that no substar of $S_{v_{i'}^{q}}$, $i' \in [n]$, $q > j$, could be placed in $G_j$ for the following reason. The star $S_{v_{i'}^{q}}$ has size $c_q-(k-1)b_q-(k-1)x_{i'}$.  By cutting at most $k'$ star-edges of $S_{v_{i'}^{q}}$, the size of $S_{v_{i'}^{q}}$ can be reduced by at most $k'$, and hence, the size of any substar of $S_{v_{i'}^{q}}$ assigned to group $G_j$ is at least: $c_q-(k-1)b_q-(k-1)x_{i'} - k' > c_q -kb_q -kx_{i'} -k'$. Since  $x_{i'} \leq n^4 < b_1 \leq b_q$ and since $k' < b_1 \leq b_q$, the size of any substar of $S_{v_{i'}^{q}}$ assigned to group $G_j$ is at least $c_q -3kb_q > c_q -k'b_q > c_j$ because $j < q$ and $c_q -c_j \geq k'b_q$ by the choice of the $c_1, \ldots, c_k$.

Second, observe that, by the same argument as above, no subtree of an edge-gadget $T_e$ that contains the root $r_e$ of $T_e$ can belong to $G_j$ simply because such a tree contains at least $c_{p}^{q}-k'$ many vertices, and this number exceeds $c_j$, for any $j, p, q \in [k], p < q$ by the choice of the numbers $c_j$ and $c_{p}^{q}$. Similarly, no substar of a copy of the star $S_{fix}$ can be in $G_j$.

It follows from above that, if no substar of a vertex-gadget $S_{v_{i}^{j}}$ is in $G_j$, then the largest number of vertices that can be placed in $G_j$ is obtained by placing $k'$ vertex-gadgets corresponding to vertices in group $j-1$, which would result in a number of vertices in $G_j$ that is at most $k'c_{j-1} < c_j$, by the choice of the numbers $c_1, \ldots, c_k$.

Therefore, a substar of $S_{v_{i}^{j}}$ must be assigned by $\f$~to group $G_j$. Since by Observation~\ref{obs:root} the root $r$ of $T$ is assigned to $G_{rest}$ by $\f$, the edge $rr_{v_{i}^{j}}$ must be cut by $P$. This completes the proof.
\qed
\end{proof}

\begin{lemma}
\label{lem:edgegadget}
For each $j, q \in [k]$, $j < q$, $P$ cuts exactly one edge $rr_e$, between the root $r$ of $T$ and the root $r_e$ of an edge-gadget $T_e$ corresponding to an edge between a vertex in color class $C_j$ and a vertex in color class $C_q$. Moreover, $\f$ assigns a substar of $S_e$ (in $T_e$) to group $G_{j}^{q}$.
\end{lemma}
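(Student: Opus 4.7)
The plan is to mirror the strategy of Lemma~\ref{lem:vertexgadget}. I would fix $j, q \in [k]$ with $j < q$ and analyze the group $G_j^q$ of size $c_j^q$. By Observation~\ref{obs:root} the root $r$ belongs to $G_{rest}$, so every \pc{} in $G_j^q$ lies inside exactly one of the subtrees hanging off $r$: a vertex-gadget, an edge-gadget, or a copy of $S_{fix}$.

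I would first catalogue the maximum size of each such \pc{}, exploiting the geometric spacing $\kappa = k'^2$ of the signature ladder. A substar of a vertex-gadget $S_{v_{i'}^{p}}$ has size at most $c_p \leq c_k < c_j^q/\kappa$. A substar of $S_{e'}$ for an edge $e'$ between classes $C_{j'}$ and $C_{q'}$ has size at most $c_{j'}^{q'}$; when $(j',q') \neq (j,q)$ and $c_{j'}^{q'} > c_j^q$, trimming the substar down to a size at most $c_j^q$ forces cutting at least $c_{j'}^{q'} - c_j^q \geq (\kappa - 1)c_j^q$ star-edges of $S_{e'}$ one by one (since each such leaf is incident only to $r_{e'}$), which vastly exceeds the cut budget $k'$. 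A subtree of $T_{e'}$ containing $r_{e'}$ has size at most $c_{j'}^{q'} + b_{j'} + b_{q'} + 2n^4$, so the same ladder bound applies. Finally, a substar of $S_{fix}$ has size bounded by $c_{k-1}^k + k' + 1$, but trimming one down to at most $c_j^q$ likewise exhausts the cut budget. Thus every \pc{} in $G_j^q$ either has size exactly $c_j^q$ and is a substar of $S_e$ for some $e$ between $C_j$ and $C_q$, or has size at most $2c_j^q/\kappa$.

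Next I would run the undersize argument parallel to the one in Lemma~\ref{lem:vertexgadget}: assume for contradiction that no substar of $S_e$ with $e$ between $C_j$ and $C_q$ is assigned to $G_j^q$ by $\f$. Then each of the at most $k'+1$ \pcs{} in $G_j^q$ has size at most $2c_j^q/\kappa$, so their total is at most $2(k'+1)c_j^q/\kappa < c_j^q$ (since $\kappa = k'^2$), contradicting $|G_j^q| = c_j^q$. Hence a substar of some $S_e$ with $e = v_i^j v_p^q$ must be placed in $G_j^q$, and since $r \in G_{rest}$ the edge $rr_e$ must belong to $E_P$.

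Uniqueness---that $P$ cuts exactly one such $rr_e$ per pair $(j,q)$---will follow from a global cut-count: Lemma~\ref{lem:vertexgadget} spends $k$ cuts on vertex-gadgets; detaching a substar of $S_e$ from its edge-gadget $T_e$ requires cutting $rr_e$ together with $r_e r'_{v_i^j}$ and $r_e r'_{v_p^q}$ (three cuts per activated edge-gadget); and $k + 3\binom{k}{2} = k'$ exhausts the cut budget, leaving room for exactly $\binom{k}{2}$ activated edge-gadgets, one per class-pair. The main technical obstacle is the edge-gadget case analysis above, where I must rule out configurations in which a large wrong-signature subtree is surgically trimmed to the exact target size $c_j^q$; this is precisely where the large ratio $\kappa = k'^2$ between consecutive signatures is used, since any such trimming would force more cuts than are available and would prevent the remaining groups from being formed.
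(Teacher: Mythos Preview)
Your existence argument is essentially the paper's: classify the \pcs{} that could lie in $G_j^q$, use the $k'^2$-spacing of the signature ladder to show that anything other than a substar of some $S_e$ with $e$ between $C_j$ and $C_q$ is either too large to fit (and cannot be trimmed within the cut budget $k'$) or too small (so that at most $k'+1$ of them cannot sum to $c_j^q$). Two small imprecisions to fix: your case list omits substars of the stars $S'_{v_{i'}^{j'}}$ sitting inside edge-gadgets as well as single-leaf \pcs{} (both have size well below $c_k$, so they fold into your small case), and the claim that a correct-signature substar has size ``exactly $c_j^q$'' is not right---after up to $k'$ star-edge cuts it has size at least $c_j^q - k'$, which is still all you need for the contradiction.

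On uniqueness: the paper's proof of this lemma, despite the word ``exactly'' in the statement, actually establishes only existence; the ``exactly one per pair'' claim is obtained later via Lemma~\ref{lem:edgegadgetexact} and the global budget count in Corollary~\ref{cor:exactassignment}. Your shortcut is valid---once a substar of $S_e$ (which by definition contains $r_e$ but neither $r'_{v_i^j}$ nor $r'_{v_p^q}$) is a \pc{}, the three edges $rr_e$, $r_e r'_{v_i^j}$, $r_e r'_{v_p^q}$ must all lie in $E_P$, and these triples are pairwise disjoint across the $\binom{k}{2}$ pairs and disjoint from the $k$ vertex-gadget cuts of Lemma~\ref{lem:vertexgadget}, exhausting the budget $k'$. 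Just be aware that you are absorbing the content of the subsequent corollary into this lemma; if you keep it here, the later steps become redundant.
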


\begin{proof}
The proof follows similar arguments to that of Lemma~\ref{lem:vertexgadget}, by analyzing the components in $T-E_P$ assigned to group $G_{j}^{q}$.

First, observe that $G_{j}^{q}$ must contain a subtree from some edge-gadget. This is because the largest vertex-gadget has size smaller than $c_k$, and $c_k < k'^2 c_{1}^{2}$, where $c_{1}^{2}$ is the smallest size of any group $G_{j}^{q}$, $j, q \in [k]$, $j < q$. Therefore, cutting $k'$ vertex-gadgets and placing them all in a group $G_{j}^{q}$ cannot result in $G_{j}^{q}$ having size $c_{j}^{q}$. Moreover, no substar of a copy of $S_{fix}$ can be placed in $G_{j}^{q}$ because the size of $S_{fix}$ is $c_{k-1}^{k}+k'+1$. Therefore, $G_{j}^{q}$ must contain a subtree of an edge-gadget.

Second, observe that if no substar of a star $S_{e'}$, contained in an edge gadget $T_{e'}$, is placed in $G_{j}^{q}$, then the components placed in $G_{j}^{q}$ consist only of substars of vertex-gadgets and of substars of stars $S'_{v_{i'}^{j'}}$ that are contained in edge-gadgets (plus isolated vertices). Each substar of a vertex gadget has size at most $c_k$ and each substar of a star $S'_{v_{i'}^{j'}}$ has size at most $b_k+n^4 < c_k$. Therefore, $k'$ such substars have size at most $k'\cdot c_k < c_{j}^{q}$.

Third, $G_{j}^{q}$ cannot contain a substar of a star $S_{e'}$, contained in an edge-gadget $T_{e'}$, such that $e'$ has an endpoint $v_{j'}^{q'}$ in a class $C_{q'}$, where either $q' > q$ or $q'=q$ and $j' > j$. This is because such a substar, even after cutting $k'$ of its edges, will have size at least $c_{j'}^{q'} -k' > k'^2 \cdot c_{j}^{q} -k' > c_{j}^{q}$, by the choice of the numbers $c_{1}^{2}, \ldots, c_{k-1}^{k}$.

Finally, the substars of stars $S_{e'}$ of edge-gadgets $T_{e'}$ contained in $G_{j}^{q}$ cannot all correspond to edges $e'$ both of whose endpoints are in classes $C_{j'}$, $C_{q'}$ where $j' < q'$ satisfy $q' < q$ or $q'=q$ and $j' < j$. This is because $k'$ such substars will have size at most $k' \cdot c_{j'}^{q'} < c_{j}^{q}$, by the choice of $c_{1}^{2}, \ldots, c_{k-1}^{k}$.

It follows from above that $G_{j}^{q}$ must contain a substar of a star $S_e$ such that $e=v_{i}^{j}v_{p}^{q}$, where vertex $v_{i}^{j}$ corresponds to a vertex from class $C_j$, and vertex $v_{p}^{q}$ corresponds to a vertex from class $C_q$. Since by Observation~\ref{obs:root} the root $r$ of $T$ is assigned to $G_{rest}$ by $\f$, the edge $rr_e$ must be cut by $P$. \qed
\end{proof}

\begin{lemma}
\label{lem:edgegadgetexact}
For each $j, q \in [k]$, $j < q$, there is a solution $P$ that such that $\f$ assigns group $G_{j}^{q}$ a single component in $T-E_P$, consisting of a (whole) star $S_e$ in edge-gadget $T_e$ that corresponds to an edge $e$ between a vertex in color class $C_j$ and a vertex in color class $C_q$.
\end{lemma}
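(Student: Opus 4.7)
The plan is to run a tight cut-counting argument on an arbitrary solution $P=(E_P,\f)$ of $I'$ and show that the $k'$ cuts are completely pinned down. First, I would invoke Lemma~\ref{lem:vertexgadget} and Lemma~\ref{lem:edgegadget} to account for $k$ cuts at edges of the form $rr_{v_{i}^{j}}$ (one per color class $C_j$) together with ${k \choose 2}$ cuts at edges of the form $rr_e$ (one per color-pair $(C_j,C_q)$). These identify $k+{k \choose 2}$ of the $k'=k+3{k \choose 2}$ available cuts and leave a deficit of exactly $2{k \choose 2}$.

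Next, I would argue that, for each of the ${k \choose 2}$ chosen edge-gadgets $T_e$ with $e=v_{i}^{j}v_{p}^{q}$, both edges $r_er'_{v_{i}^{j}}$ and $r_er'_{v_{p}^{q}}$ must also lie in $E_P$. The reason is structural: Lemma~\ref{lem:edgegadget} asserts that the component of $T-E_P$ containing $r_e$ is a substar of $S_e$, and the only edges incident to $r_e$ outside $S_e$ are $rr_e$, $r_er'_{v_{i}^{j}}$, and $r_er'_{v_{p}^{q}}$. If either of the latter two were uncut, the component of $r_e$ would extend into $S'_{v_{i}^{j}}$ or $S'_{v_{p}^{q}}$ and would fail to be a substar of $S_e$. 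This supplies the missing $2{k \choose 2}$ cuts, bringing the total to exactly $k'$.

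Since $|E_P|=k'$, every edge of $E_P$ is now identified, and in particular no star-edge of any $S_e$ can lie in $E_P$. Consequently the substar of $S_e$ placed in $G_{j}^{q}$ by $\f$ must be the whole star $S_e$, which has exactly $c_{j}^{q}$ vertices. Because $|G_{j}^{q}|=c_{j}^{q}$, the star $S_e$ must be the sole component assigned to $G_{j}^{q}$, establishing the lemma.

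The main obstacle, as I see it, is making the local topological step precise: arguing carefully from the definition of substar that the $r_e$-component in $T-E_P$ can only be a substar of $S_e$ if both tree-edges from $r_e$ to $r'_{v_{i}^{j}}$ and $r'_{v_{p}^{q}}$ are cut. Once this observation is formalized, the identity $k+3{k \choose 2}=k'$ rigidly pins down every cut of $P$, and the conclusion of the lemma is automatic.
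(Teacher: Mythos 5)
Your argument is correct, but it is a genuinely different proof from the one in the paper. You treat the budget as tight: Lemma~\ref{lem:vertexgadget} forces $k$ cuts of the form $rr_{v_{i}^{j}}$, Lemma~\ref{lem:edgegadget} forces ${k \choose 2}$ cuts of the form $rr_e$, and then---reading Lemma~\ref{lem:edgegadget} literally, i.e., the component assigned to $G_{j}^{q}$ is a substar of $S_e$ and hence contained in $V(S_e)$ while containing $r_e$---the two edges $r_er'_{v_{i}^{j}}$ and $r_er'_{v_{p}^{q}}$ must also be cut, since they join $r_e$ to vertices outside $S_e$. These $k+3{k \choose 2}=k'$ edges are pairwise distinct, so they exhaust $E_P$, no star-edge of any $S_e$ can be cut, and the size identity $|S_e|=c_{j}^{q}=|G_{j}^{q}|$ finishes the proof. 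The paper instead worries precisely about the scenario your counting forbids (a \emph{proper} substar of $S_e$ in $G_{j}^{q}$ whose deficit is compensated by stray leaves cut from other stars $S_{e'}$): it picks a solution minimizing the number of star-edges cut from the $S_{e'}$'s placed in the groups, rules out foreign components in $G_{j}^{q}$ by a swap, and then builds an auxiliary directed graph on the groups holding proper substars, finds a cycle, and performs an exchange along it---an argument that needs only the weaker information and yields only the existential claim of the lemma. Your route is shorter and in fact stronger: it applies to \emph{every} solution and delivers Corollary~\ref{cor:exactassignment} in one stroke. The only caveat is that its validity hinges on the strict containment assertion of Lemma~\ref{lem:edgegadget}; if one only knew that the component assigned to $G_{j}^{q}$ \emph{contains} $r_e$, you would need the (easy) extra size observation that a component reaching into $S'_{v_{i}^{j}}$ or $S'_{v_{p}^{q}}$ has size at least $c_{j}^{q}+b_j+x_i-2k'>c_{j}^{q}$ and so cannot sit in $G_{j}^{q}$; with that one line added, your proof is complete and self-contained.
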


\begin{proof}
By Lemma~\ref{lem:edgegadget}, we know that there is a solution $P$ such that $\f$ assigns to each group $G_{j}^{q}$, $j, q \in [k], j < q$, a substar of a star $S_e$ contained in an edge-gadget $T_e$ corresponding to an edge $e=v_{i}^{j}v_{p}^{q}$, between a vertex $v_{i}^{j} \in C_j$ and a vertex $v_{p}^{q} \in C_q$. Choose a solution $P$ that minimizes the number of edges it cuts from the stars $S_{e'}$, contained in edge-gadgets $T_{e'}$, and placed in the groups $G_{j}^{q}$, $j, q \in [k]$, $j < q$. We claim that $P$ satisfies the statement of the lemma.

Suppose not, and let $G_{j}^{q}$, for some $j, q \in [k]$, $j < q$, be a group that is assigned a proper substar of $S_e$, where $e=v_{i}^{j}v_{p}^{q}$, $i, p \in [n]$. Then $P$ must cut a star-edge $r_ew$ of $S_e$. First, observe that $G_{j}^{q}$ cannot contain a component $K$ that is not contained in some substar of a star $S_{e'}$ (of an edge-gadget $T_{e'}$) that is assigned by $\f$ to some group $G_{j'}^{q'}$, for some $j', q' \in [k], \{j', q'\} \neq \{j, q\}$. Otherwise, we can modify $P$ so that $P$ does not cut the edge $r_ew$, and cuts instead an edge incident to a leaf $x$ in $K$, if $K$ itself is not a leaf (otherwise, we do not cut anything), and modify $\f$ to swap $w$ with $u$. By modifying $P$ as such, we would obtain another solution that cuts fewer edges from the stars $S_e$'s, whose substars are assigned to the groups, contradicting the choice of $P$. It follows from above that each group $G_{j}^{q}$ consists of a substar of $S_e$, for some edge $e=v_{i}^{j}v_{p}^{q}$, plus components that are contained in stars $S_{e'}$ of edge-gadgets $S_{e'}$, such that a substar of $S_{e'}$ is assigned to some group $G_{j'}^{q'}$, for some $j', q' \in [k], \{j', q'\} \neq \{j, q\}$. Moreover, no such component that is contained in a star $S_{e'}$ can contain the root $r_{e'}$ of $S_{e'}$. Otherwise, the size of that component would exceed $k'$, and hence, this component, together with the substar of $S_e$ (whose size is at least $c_{j}{q} - k'$ would result in a total size that exceeds the size $c_{j}^{q}$ of group $G_{j}^{q}$. It follows that group $G_{j}^{q}$ consists of a proper substar of $S_e$, plus isolated vertices, where each is a leaf in some $S_{e'}$ (of an edge-gadget $T_{e'}$), such that a proper star of $S_{e'}$ is assigned by $\f$ to some other group $G_{j'}^{q'}$.

Now construct the following auxiliary graph, each of whose vertices corresponds to a group $G_{j}^{q}$ that contains a proper substar of some $S_e$ in an edge-gadget $T_e$, and in which there is a directed edge from $G_{j}^{q}$ to $G_{j'}^{q'}$ if the proper substar of $S_e$ assigned to $G_{j}^{q}$ contains a leaf of a star $S_{e'}$ such that a proper substar of $S_{e'}$ is in $G_{j'}^{q'}$. By the definition of the vertex-set of this auxiliary graph, and from the above discussion, each vertex in this auxiliary graph must have out-degree at least 1. Therefore, there must exist a cycle in this auxiliary graph. Such a cycle, however, would clearly contradict the choice of $P$, as we can define another solution that restores an edge from each of the proper substars contained in the groups of this cycle, without affecting the size of each group in this cycle.
\qed
\end{proof}

\begin{corollary}\label{cor:exactassignment}
There is a solution $P$ that cuts exactly $k'=k + 3{k \choose 2}$ edges from $T$ as follows. For each $j \in [k]$, $P$ cuts exactly one edge between the root $r$ of $T$ and the root of a vertex-gadget corresponding to a vertex in color class $C_j$; moreover, $\f$ assigns the resulting vertex-gadget to group $G_j$. For each $j, q \in [k]$, $j < q$, $P$ cuts exactly 3 edges from one edge-gadget $T_e$, corresponding to an edge $e$ between a vertex $v_{i}^{j}, i \in [n]$, in color classes $C_j$, and a vertex $v_{p}^{q}, p \in [n]$, in color class $C_q$; those 3 edges are the edges $rr_e$, $r_er'_{v_{i}^{j}}$, and $r_er'_{v_{p}^{q}}$, where $r_e$ is the root of star $S_e$ in $T_e$, and $r'_{v_{i}^{j}}, r'_{v_{p}^{q}}$ are the roots of stars $S'_{v_{i}^{j}}, S'_{v_{p}^{q}}$ in $T_e$, respectively; moreover, $\f$ assigns $S_e$ to group $G_{j}^{q}$.
\end{corollary}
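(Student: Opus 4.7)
The plan is to assemble the corollary by combining Lemmas~\ref{lem:vertexgadget},~\ref{lem:edgegadget}, and~\ref{lem:edgegadgetexact}, and then closing with a tight edge-budget argument to rule out any cuts other than those listed.

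First, I would start from the solution $P$ guaranteed by Lemma~\ref{lem:edgegadgetexact}: for every pair $j,q \in [k]$ with $j<q$, the group $G_{j}^{q}$ is assigned (as a single \pc{}) the whole star $S_{e}$ of some edge-gadget $T_{e}$ corresponding to an edge $e=v_{i}^{j}v_{p}^{q}$ between $C_{j}$ and $C_{q}$. For $S_{e}$ to appear as a connected component of $T-E_{P}$, the three edges incident to the root $r_{e}$ of $S_{e}$ in $T$, namely $rr_{e}$, $r_{e}r'_{v_{i}^{j}}$, and $r_{e}r'_{v_{p}^{q}}$, must all lie in $E_{P}$. This yields exactly $3\binom{k}{2}$ cuts, one triple per class-pair.

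Next, I would apply Lemma~\ref{lem:vertexgadget} to this same $P$: for each $j \in [k]$, there is a vertex $v_{i}^{j}\in C_{j}$ such that $P$ cuts the edge $rr_{v_{i}^{j}}$ and $\f$ assigns the resulting substar of $S_{v_{i}^{j}}$ to $G_{j}$. That contributes at least $k$ further cuts. These cuts are distinct from the $3\binom{k}{2}$ cuts obtained in the edge-gadgets, since the latter all lie inside edge-gadgets while the former are incident to roots of vertex-gadgets.

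Finally I would invoke the budget: $P$ cuts at most $k' = k + 3\binom{k}{2}$ edges, and we have already identified exactly $k+3\binom{k}{2}$ cuts. Hence $P$ makes no other cuts. In particular, no edge inside any vertex-gadget $S_{v_{i}^{j}}$ identified above is cut, so the entire star $S_{v_{i}^{j}}$ becomes a \pc{} assigned to $G_{j}$, exactly as claimed. The only mildly delicate point is confirming disjointness of the two families of cuts and that the budget is genuinely tight; both follow immediately from the structural placement of the cuts inside different gadget types, so I do not anticipate a real obstacle here --- the work was already done in Lemmas~\ref{lem:vertexgadget}--\ref{lem:edgegadgetexact}, and the corollary is essentially a bookkeeping consequence.
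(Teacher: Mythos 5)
Your proposal is correct and follows essentially the same route as the paper's own proof: combine Lemma~\ref{lem:vertexgadget} with the solution from Lemma~\ref{lem:edgegadgetexact}, observe that isolating each $S_e$ forces the three cuts at $r_e$, and then use the budget $k+3\binom{k}{2}$ to conclude no other edges are cut, so each vertex-gadget survives intact. No gaps worth noting.
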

\fi

\ifshort
To prove the converse, let $P=(E_P, \f)$ be a solution to the instance $I'=(T, k', b=k+{k \choose 2}, c_1, \ldots, c_k, c_{1}^{2}, \ldots, c_{1}^{k}, c_{2}^{3}, \ldots, c_{2}^{k} \ldots, c_{k-1}^{k}, t')$ of \gtp. Let $G_j$, $j \in [k]$, denote the group of size $c_j$, $G_{j}^{q}$, $j, q \in [k], j < q$, denote the group of size $c_{j}^{q}$, and $G_{rest}$ denote the group of size $t'$. We have:

\begin{lemma}\label{cor:exactassignment}
There is a solution $P$ that cuts exactly $k'=k + 3{k \choose 2}$ edges from $T$ as follows. For each $j \in [k]$, $P$ cuts exactly one edge between the root $r$ of $T$ and the root of a vertex-gadget corresponding to a vertex in color class $C_j$; moreover, $\f$ assigns the resulting vertex-gadget to group $G_j$. For each $j, q \in [k]$, $j < q$, $P$ cuts exactly 3 edges from one edge-gadget $T_e$, corresponding to an edge $e$ between a vertex $v_{i}^{j}, i \in [n]$, in color classes $C_j$, and a vertex $v_{p}^{q}, p \in [n]$, in color class $C_q$; those 3 edges are the edges $rr_e$, $r_er'_{v_{i}^{j}}$, and $r_er'_{v_{p}^{q}}$, where $r_e$ is the root of star $S_e$ in $T_e$, and $r'_{v_{i}^{j}}, r'_{v_{p}^{q}}$ are the roots of stars $S'_{v_{i}^{j}}, S'_{v_{p}^{q}}$ in $T_e$, respectively; moreover, $\f$ assigns $S_e$ to group $G_{j}^{q}$.
\end{lemma}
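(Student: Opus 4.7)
The plan is to proceed in three steps: first anchor the root $r$ of $T$ to the group $G_{rest}$; second, apply a size-magnitude analysis on each group $G_j$ and $G_j^q$ to identify the type of component it must contain; and third, use a minimality argument to upgrade ``substar'' to ``entire star.''

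For the root anchoring, I would observe that $k'+1$ copies of $S_{fix}$ are attached to $r$, each of size $c_{k-1}^k + k' + 1$, while $P$ cuts only $k'$ edges. Hence at least one of these copies is joined to $r$ by an uncut substar of size at least $c_{k-1}^k + 1$. Since $c_{k-1}^k$ is the largest target size among groups other than $G_{rest}$, this forces $\f(r) = G_{rest}$, so every edge from $r$ to a subtree assigned to a $G_j$ or a $G_j^q$ must be cut.

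For the group-by-group analysis, I would exploit the geometric spacing of the signature numbers: each successive entry of $b_1, \ldots, b_k, c_1, \ldots, c_k$ (and likewise of $c_1^2, \ldots, c_{k-1}^k$) is a factor of $k'^2$ larger than its predecessor, while $x_i$ and $k'$ are polynomially bounded. For a fixed $j \in [k]$, this implies: any substar containing the root of an edge-gadget is too large for $G_j$; any substar of a vertex-gadget $S_{v_{i'}^q}$ with $q > j$ is too large (even after $k'$ leaf-cuts); and any combination of at most $k'$ pieces drawn from strictly lower-class vertex-gadgets, from stars $S'_{v_{i'}^{j'}}$ inside edge-gadgets, or from isolated leaves, sums to less than $c_j$. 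So the only way to reach exactly $c_j$ is to include a substar of some $S_{v_i^j}$, forcing $rr_{v_i^j} \in E_P$. A parallel case analysis on $G_j^q$ shows that a substar of some $S_e$ with $e = v_i^j v_p^q$ must lie in $G_j^q$, forcing $rr_e \in E_P$.

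For the exactness, I would choose among all valid solutions one $P$ minimizing the number of cuts made inside the stars $S_e$ whose substars lie in the groups $G_j^q$. If some $G_j^q$ contained only a proper substar of its $S_e$, the missing vertices must be supplied by isolated leaves ``borrowed'' from other stars $S_{e'}$ whose substars sit in other groups $G_{j'}^{q'}$, since larger components are excluded by the size analysis and a second root-containing substar would already overshoot $c_j^q$. Building the directed auxiliary graph whose vertices are the groups $G_j^q$ holding a proper substar, with an arc from $G_j^q$ to $G_{j'}^{q'}$ whenever $G_j^q$ borrows a leaf from the $S_{e'}$ whose substar sits in $G_{j'}^{q'}$, yields a digraph of positive minimum out-degree and hence a directed cycle; along such a cycle one can restore one star-edge per group without altering any group size, contradicting minimality. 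Therefore each $G_j^q$ must receive the entire star $S_e$, which in turn forces $P$ to cut $rr_e$, $r_e r'_{v_i^j}$, and $r_e r'_{v_p^q}$. Summing yields exactly $k + 3{k \choose 2} = k'$ cuts, matching the budget. The hard part will be the cycle-in-the-auxiliary-graph step, which first requires a careful enumeration to rule out every way of patching a proper substar up to size $c_j^q$ other than by isolated leaves.
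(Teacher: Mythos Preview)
Your proposal is correct and follows essentially the same route as the paper: the root-anchoring step is the paper's Observation~\ref{obs:root}, the size-magnitude analysis is Lemmas~\ref{lem:vertexgadget} and~\ref{lem:edgegadget}, and the minimality-plus-auxiliary-digraph argument is exactly Lemma~\ref{lem:edgegadgetexact}; the final budget count forcing the vertex-gadgets to be whole is also how the paper closes. You have correctly identified the delicate step as the enumeration showing that the only way a proper substar of $S_e$ can be topped up to $c_j^q$ is via isolated leaves borrowed from other $S_{e'}$ (the paper handles the ``non-borrowed'' case by a direct leaf-swap against minimality, and then bounds any root-containing borrowed piece by size), so your plan matches the paper's in both structure and level of detail.
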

\fi

\iflong
\begin{proof}
By Lemma~\ref{lem:vertexgadget}, any solution $P$
satisfies that, for each $j \in [k]$, $P$ cuts exactly one edge $rr_{v_{i}^{j}}$ between the root $r$ of $T$ and the root $r_{v_{i}^{j}}$ of a vertex-gadget $S_{v_{i}^{j}}$ corresponding to a vertex from class $C_j$ in $M$, and $\f$ assigns a substar of  $S_{v_{i}^{j}}$ to group $G_j$. By Lemma~\ref{lem:edgegadgetexact}, we can assume that, for each $j, q \in [k]$, $j < q$, $\f$ assigns group $G_{j}^{q}$ a single component consisting of a star $S_e$ of an edge-gadget $T_e$, corresponding to an edge $e=v_{i}^{j}v_{p}^{q}$. Because the size of $G_{j}^{q}$ is exactly the size of $S_e$,  $P$ must cut the 3 edges $rr_e$, $r_er'_{v_{i}^{j}}$, and $r_er'_{v_{p}^{q}}$ to separate $S_e$ from the rest of $T_e$. Since $|E_P|$ is at most $k'=k + 3{k \choose 2}$, it follows that the above edges are all the edges of $T$ that are cut by $P$, and hence, the substar of $S_{v_{i}^{j}}$ assigned by $\f$ to group $G_j$, is indeed the whole star $S_{v_{i}^{j}}$. \qed
\end{proof}

We are now ready to prove the converse of Lemma~\ref{lem:togtp}:
\fi
\begin{lemma}\label{lem:converse}
If $I'$ is a yes-instance of \gtp~then $(M, f)$ is a yes-instance of $k$-MCC.
\end{lemma}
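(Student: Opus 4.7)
The approach is to leverage the tight structural information provided by Corollary~\ref{cor:exactassignment}. That corollary guarantees that any solution $P=(E_P,\f)$ determines indices $i_1,\dots,i_k\in[n]$ and edges $e_{j,q}=v_{\alpha(j,q)}^{j}v_{\beta(j,q)}^{q}$ of $M$ for each $j<q$, such that $E_P$ consists precisely of (i) one edge cutting off each vertex-gadget $S_{v_{i_j}^{j}}$ into $G_j$, and (ii) the three top edges of each chosen edge-gadget $T_{e_{j,q}}$, which breaks $T_{e_{j,q}}$ into the middle star $S_{e_{j,q}}$ (assigned to $G_{j}^{q}$) and two ``side stars'' $S'_{v_{\alpha(j,q)}^{j}}$, $S'_{v_{\beta(j,q)}^{q}}$ of sizes $b_j+x_{\alpha(j,q)}$ and $b_q+x_{\beta(j,q)}$ respectively. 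The whole plan is to prove $\alpha(j,q)=i_j$ and $\beta(j,q)=i_q$ for every $j<q$, for then each pair $(v_{i_j}^{j},v_{i_q}^{q})$ is an edge of $M$ and $\{v_{i_1}^{1},\dots,v_{i_k}^{k}\}$ is the desired multicolored $k$-clique.

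First I will locate the $2{k\choose 2}$ side stars. By Observation~\ref{obs:root} the component of $T-E_P$ containing $r$ is assigned to $G_{rest}$, and no side star can be placed in any $G_{j}^{q}$ because the single component $S_{e_{j,q}}$ already exhausts the capacity $c_{j}^{q}$ of that group; hence every side star lies in some $G_j$ or in $G_{rest}$. A direct accounting of group sizes against the size of the $r$-component shows that the total size of side stars assigned to $G_{rest}$ equals
\[
\Sigma \;=\; \sum_{j<q}\bigl(x_{\alpha(j,q)}+x_{\beta(j,q)}\bigr)\;-\;(k-1)\sum_{j\in[k]}x_{i_j},
\]
which satisfies $|\Sigma|\le 2k^{2}n^{4}<b_{1}$. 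Since every side star has size at least $b_{1}$, no side star fits in $G_{rest}$; consequently $\Sigma=0$ and every side star is assigned to some $G_j$.

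Next I will perform a per-group digit analysis. Fix $j\in[k]$. Since $|S_{v_{i_j}^{j}}|=c_j-(k-1)b_j-(k-1)x_{i_j}$ and $|G_j|=c_j$, the side stars assigned to $G_j$ must satisfy
\[
\sum_{\ell}b_{j'_\ell}+\sum_{\ell}x_{?_\ell}\;=\;(k-1)b_j+(k-1)x_{i_j},
\]
where $j'_\ell\in[k]$ and $?_\ell\in[n]$ index the side stars landing in $G_j$. Because each $x_i\le n^{4}\ll b_1$, the consecutive $b_{j'}$'s differ by a factor of $k'^{2}$, and the total number of side stars is at most $2{k\choose 2}<k'^{2}$, first reducing the equation modulo $b_{1}$ forces $\sum_{\ell}x_{?_\ell}=(k-1)x_{i_j}$ (both sides being $<b_1$), and then reading the remaining identity $\sum_{\ell}b_{j'_\ell}=(k-1)b_j$ digit-by-digit in base $k'^{2}$ forces every $j'_\ell$ to equal $j$ and the number of side stars in $G_j$ to be exactly $k-1$. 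These $k-1$ side stars are thus precisely the ones sitting on the ``$j$-side'' of the $k-1$ cut edge-gadgets incident to class $C_j$.

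Finally, the identity $\sum_{\ell=1}^{k-1}x_{?_\ell}=(k-1)x_{i_j}$ with all elements in $X$ and $X$ being $(k-1)$-non-averaging forces $x_{?_\ell}=x_{i_j}$ for every $\ell$; since the elements of $X$ are pairwise distinct, each index $?_\ell$ equals $i_j$. Hence $\alpha(j,q')=i_j$ for every $q'>j$ and $\beta(j',j)=i_j$ for every $j'<j$, so $e_{j,q}=v_{i_j}^{j}v_{i_q}^{q}\in E(M)$ for every $j<q$, and $\{v_{i_1}^{1},\dots,v_{i_k}^{k}\}$ is a $k$-multicolored clique in $M$. The main obstacle is the digit analysis of the third paragraph: the scale separation between the $x_i$'s and the $b_j$'s and the bound on the number of summands must be marshalled carefully so that the base-$k'^{2}$ representation is unambiguous. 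Once that uniqueness is in hand, the pigeonhole step that pins down which side stars go to $G_j$ and the non-averaging step that pins down their vertex indices both follow immediately.
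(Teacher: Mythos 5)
Your proposal is correct and follows essentially the same route as the paper: start from Corollary~\ref{cor:exactassignment}, use the scale separation among the $b_j$'s to force each group $G_j$ to absorb exactly the $k-1$ side stars $S'_{v_{*}^{j}}$, and then invoke the $(k-1)$-non-averaging property of $X$ to conclude that all of them correspond to $v_{i_j}^{j}$, yielding the clique. The only differences are presentational: you rule out side stars landing in $G_{rest}$ by an explicit global size accounting (a point the paper leaves implicit) and replace the paper's descending induction from $G_k$ down to $G_1$ by a single base-$k'^{2}$ digit-uniqueness argument per group, both of which are sound.
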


\begin{proof}
 By \ifshort Lemma~\ref{cor:exactassignment}\fi \iflong Corollary~\ref{cor:exactassignment}\fi, we can assume that $I'$ has a solution $P=(E_P, \f)$ that cuts $k+3{k \choose 2}$ edges, and that satisfies the properties in the \ifshort lemma\fi \iflong corollary\fi. Let $rr_{v_{i_1}^{1}}, \ldots, rr_{v_{i_k}^{k}}$, $i_1, \ldots, i_k \in [n]$, be the edges between the root $r$ of $T$ and the roots of the vertex-gadgets $S_{v_{i_1}^{1}}, \ldots, S_{v_{i_1}^{k}}$ that $P$ cuts. We claim that the set of vertices $Q=\{v_{i_1}^{1}, \ldots, v_{i_k}^{k}\}$ induce a multi-colored clique in $M$. To show that, it suffices to show that each of the ${k \choose 2}$ edges $rr_e$ cut by $P$, between $r$ and the root of an edge-gadget $T_e$, where $e=v_{i}^{j}v_{p}^{q}$, $i, p \in [n], p, q \in [k], p < q$, satisfies that $v_{i}^{j}, v_{p}^{q} \in Q$.

Consider an arbitrary group $G_j$, $j \in [k]$. The size of $G_j$ is $c_j$, and by \ifshort Lemma~\ref{cor:exactassignment}\fi \iflong Corollary~\ref{cor:exactassignment}\fi, $\f$ assigns the star $S_{v_{i_j}^{j}}$ of size $c_j - (k-1)b_j - (k-1)x_{i_j}$ to $G_j$. Each star $S_e$ is assigned to some group $G_{p}^{q}$ whose size is exactly $|S_e|$. Therefore, each group $G_j$ contains a vertex-gadget and some of the stars $S'_{v_{i'}^{j'}}$, $i' \in [n], j' \in [k]$. Observe that group $G_j$, $j < k$, cannot contain a star $S'_{v_{i'}^{j'}}$ such that $j' > j$ because the size of such a star is at least $b_{j'} > k'^2 b_{j}$, and hence the size of such a star plus the size of $S_{v_{i_j}^{j}}$ would exceed the size of $G_j$. Since there are exactly $k-1$ stars, of the form $S'_{v_{*}^{k}}$ contained in edge-gadgets corresponding to edges incident to class $C_k$, it follows that all these stars must be assigned by $\f$ to group $G_k$. Moreover, no other star $S'_{v_{*}^{j}}$, $j < k$, can be assigned to $G_k$, as the size of such a star would be at least $b_1 > (k-1)x_i$ for any $i \in [n]$; hence, $G_k$ would contain vertex gadget $S_{{v_{i_k}^{k}}}$ of size $c_k - (k-1)b_k - (k-1)x_{i_k}$, plus $k-1$ stars $S'_{v_{*}^{k}}$ of total size greater than $(k-1)b_k$, plus a star of size at least $b_1 > (k-1)x_{i_k}$, and the size of $G_k$ would exceed $c_k$.

Similarly, all the $k-1$ stars of the form $S'_{v_{*}^{k-1}}$ contained in edge-gadgets corresponding to edges incident to class $C_{k-1}$ are assigned to group $G_{k-1}$, and following this argument, we obtain that for each $j \in [k]$, the $(k-1)$ stars of the form $S'_{v_{*}^{j}}$ must be assigned to group $G_j$. We claim that all these stars must correspond to the same vertex $v_{i_j}^{j}$. Observe that this will prove that $Q$ is a clique, since it will imply that each vertex in $Q$ is incident to exactly $k-1$ of the ${k \choose 2}$ many edges between the color classes.

Let $S'_{v_{i'_1}^{j}}, \ldots, S'_{v_{i'_{k-1}}^{j}}$ be the $k-1$ stars placed in $G_j$. The sizes of these stars are $b_j + x_{i'_1}, \ldots, b_j + x_{i'_{k-1}}$, respectively. The size $c_j$ of $G_j$ is equal to the sum of the sizes of these $k-1$ stars, plus that of $S_{v_{i_j}^{j}}$. Therefore: $c_j=c_j - (k-1)b - (k-1)x_{i_j} + (k-1)b +  x_{i'_1} + \cdots + x_{i'_{k-1}}$, and hence, $(k-1)\cdot x_{i_j} = x_{i'_1} + \cdots + x_{i'_{k-1}}$. Since the set $X$ is $(k-1)$-non-averaging, it follows that $x_{i_j} = x_{i'_1} = \cdots = x_{i'_{k-1}}$, and hence, the $(k-1)$ stars $S'_{v_{*}^{j}}$ must correspond to vertex $v_{i_j}^{j}$. \qed
\end{proof}

\ifshort
\begin{theorem}   \label{thm:mainwhard}
\gtp~and~\btp~are $W[1]$-complete.
\end{theorem}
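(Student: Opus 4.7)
The plan is to assemble three ingredients: (i) $W[1]$-hardness of \gtp, (ii) an \fpt-reduction from \gtp~to \btp~yielding $W[1]$-hardness of \btp, and (iii) membership of both problems in $W[1]$.

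For (i), the $W[1]$-hardness of \gtp~is immediate from the \fpt-reduction from the $W[1]$-complete problem \mcc~to \gtp~developed in this section: Lemma~\ref{lem:togtp} handles the forward direction and Lemma~\ref{lem:converse} the converse, and the map from $(M,f)$ to $I'$ is clearly computable in \fpt-time while blowing the parameter $k$ up only to $k'=k+3\binom{k}{2}=O(k^2)$.

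For (ii), I would give a padding \fpt-reduction from \gtp~to \btp. Given an instance $(T,k,b,s_1,\ldots,s_b)$, set $s_{\max}=\max_i s_i$ and fix $M>2s_{\max}$, for instance $M=2s_{\max}+n+1$. Attach to an arbitrary vertex $v\in V(T)$, for each $i\in[b]$, a pendant path $H_i$ of order $M-s_i$ via a fresh edge $e_i$; the resulting tree $T'$ has exactly $bM$ vertices. The reduced instance is $(T',k+b,b)$, and since $b\leq k+1$ for any \gtp-yes-instance, the new parameter is $k+b=O(k)$. The forward direction is immediate: from a \gtp-solution, additionally cut all $b$ edges $e_i$ and place each $H_i$ into the group of size $s_i$, producing $b$ groups of size $M$ each. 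For the converse, the choice $M>2s_{\max}$ forces that no two $H_i$'s can lie in the same group (since $|H_i|+|H_j|\geq 2(M-s_{\max})>M$), so at most one $e_i$ is uncut; a short case analysis using the cut budget $k+b$ and the relation $|T_g|+|H_g|=M$ (where $T_g$ and $H_g$ denote the $T$- and $H$-portions of group $g$) then shows that each $H_i$ is intact and lies in a distinct group, whence the restrictions of the cut-set and of the group assignment to $T$ form a solution to the original \gtp-instance.

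For (iii), I would apply the characterization of $W[1]$ from~\cite{yijia}, by which a parameterized problem lies in $W[1]$ whenever it is decided by a polynomial-time nondeterministic algorithm that uses at most $f(k)\cdot\log n$ nondeterministic bits, for some computable $f$. For \gtp, one nondeterministically guesses the set $E_P$ of $k$ edges to cut using $k\cdot\log|E(T)|=O(k\log n)$ bits, then nondeterministically guesses the assignment $\f$ of each of the at most $k+1$ resulting components to one of the $b\leq k+1$ groups using $O(k\log k)$ additional bits; a deterministic polynomial-time pass then verifies that the $b$ group totals equal $s_1,\ldots,s_b$. The total number of nondeterministic bits is $O(k\log n)$, so both \gtp~and \btp~lie in $W[1]$.

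The main obstacle is the backward direction of the \gtp-to-\btp~reduction: picking $M$ and the shape of each $H_i$ so that the inequality $M>2s_{\max}$ together with the tight cut budget $k+b$ simultaneously rules out solutions that leave several $e_i$'s uncut, solutions that cut inside some $H_i$ and scatter its pieces across different groups, and solutions that combine pieces of $H_i$'s to match group sizes in unintended ways.
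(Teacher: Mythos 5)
Your part (i) is just the paper's own reduction from \mcc~(Lemmas~\ref{lem:togtp} and~\ref{lem:converse}), which is fine; the genuine problems are in parts (ii) and (iii). The pendant-path reduction in (ii) is not merely missing its ``short case analysis''---it is false as stated, for every choice of $M$. Take $T=K_{1,3}$ with center $c$ and leaves $l_1,l_2,l_3$, $k=1$, $b=2$, $s_1=s_2=2$: this is a no-instance of \gtp, since one cut yields components of sizes $1$ and $3$. Your reduction attaches two paths $H_1,H_2$, each on $M-2$ vertices, to some vertex $v$, and allows $k+b=3$ cuts with two groups of size $M$. Now cut the edge of $T$ separating one leaf $\ell\neq v$ from the rest, cut $e_2$ (detaching $H_2$, size $M-2$), and cut the internal edge of $H_1$ two vertices from its far end. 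The resulting components have sizes $1$, $M-2$, $2$ and $M-1$, which group into $M,M$, so the image is a yes-instance of \btp. The culprit is exactly the scenario you flagged as ``the main obstacle'': leaving $e_1$ uncut and cutting $H_1$ internally lets a prefix of $H_1$ pad the component containing $v$ by an arbitrary amount, and enlarging $M$ does not help. The paper's Lemma~\ref{lem:gtptobtp} avoids this by using stars $S_i$ of size $5n-s_i$ (with budget $k+b-1$): any severed piece of a star is a single leaf, the size bound forces the substars of distinct $S_i$'s into distinct groups, and a local exchange argument (uncut a star-edge and instead cut a leaf off a $T$-component in the same group) eliminates all star-edge cuts. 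None of that machinery transfers to paths, so your converse cannot be repaired by a tighter case analysis alone; you need the star-shaped padding gadgets.

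Part (iii) invokes the wrong machine characterization. ``Fpt-time nondeterministic computation with $f(k)\cdot\log n$ nondeterministic bits'' is the known characterization of $W[P]$, not of $W[1]$. The characterization from~\cite{yijia} used in the paper additionally requires tail-nondeterminism: all nondeterministic steps must occur within the last $h(k)$ steps of the computation, so from the first guess onward only $h(k)$ further steps are allowed---your ``deterministic polynomial-time pass'' to verify the group sums after guessing is precisely what is forbidden there. This is why the paper's membership proof precomputes, \emph{before} any guessing, DFS time stamps (for ancestor/descendant queries) and all subtree sizes, and then, after guessing the $k$ cut edges, recovers the $k+1$ component sizes from a Hasse diagram on the guessed endpoints plus the root and checks the assignment to groups, all within $h(k)$ RAM steps. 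As written, your argument establishes membership in $W[P]$ only, so neither the hardness of \btp~nor the ``complete'' half of Theorem~\ref{thm:mainwhard} is obtained by your proposal.
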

\fi

\iflong
\begin{theorem} \label{thm:mainwhard}
\gtp~is $W[1]$-complete.
\end{theorem}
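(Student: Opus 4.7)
The plan is to assemble the preceding lemmas into $W[1]$-hardness, establish membership via the machine characterization of Chen \etal~\cite{yijia}, and finally reduce \gtp~to \btp~to transfer both results.

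First, to show $W[1]$-hardness of \gtp, I would simply combine Lemma~\ref{lem:togtp} with Lemma~\ref{lem:converse}: the map from an \mcc~instance $(M,f)$ to $I'$ is computable in \fpt-time (only the non-averaging set $X$, of numbers bounded by $n^4$, needs to be assembled, which is polynomial by~\cite{binpacking,nonaveraging}), the new parameter is $k'=k+3\binom{k}{2}$, which is a function of $k$ only, and the two lemmas give the equivalence of yes-instances. Since \mcc~is $W[1]$-complete~\cite{fptbook,viallete}, \gtp~is $W[1]$-hard.

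Second, to place \gtp~in $W[1]$, I would invoke the machine-based characterization of~\cite{yijia}: a parameterized problem is in $W[1]$ iff it is decidable by a nondeterministic RAM running in $f(k)\cdot n^{O(1)}$ steps of which only $f(k)$ are nondeterministic, each writing $O(\log n)$ bits. Given $(T,k,b,s_1,\dots,s_b)$, the machine guesses the cut-set $E_P$ with $k$ nondeterministic steps of $O(\log|E(T)|)$ bits each; it then computes $T-E_P$ deterministically in polynomial time, obtaining at most $k+1$ components, and finally brute-forces the at most $(k+1)^{k+1}$ assignments of components to the $b\le k+1$ groups, accepting iff some assignment realizes the demanded sizes $s_1,\dots,s_b$. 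The deterministic verification runs in \fpt-time, so \gtp~lies in $W[1]$.

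Third, for \btp, since every \btp~instance is trivially a \gtp~instance, membership in $W[1]$ is inherited. For $W[1]$-hardness I would give an \fpt-reduction \gtp$\preceq_{fpt}$\btp. Given $(T,k,b,s_1,\dots,s_b)$, set $s=\max_i s_i$ and, for each $i$ with $s_i<s$, attach a fresh ``padding'' subtree $P_i$ on $s-s_i$ vertices to an arbitrary vertex of $T$ by a new edge $f_i$; call the resulting tree $T'$ and let $b_0$ be the number of paddings. Then $|T'|=b\cdot s$; set $k'=k+b_0$ and output $(T',k',b)$. The forward direction is immediate by cutting the $f_i$'s and sending each $P_i$ to the group of size $s_i$.

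The delicate step---and the main obstacle---is the converse of this last reduction: one must rule out that a balanced solution of $(T',k',b)$ slices a padding internally or glues paddings together inside the original $T$. I plan to force this rigidity by building each $P_i$ as a rigid gadget (for instance, using sizes drawn from a non-averaging set, analogous to the device already used in the \mcc~reduction above) so that the only way to achieve $b$ groups of exactly $s$ vertices using at most $k+b_0$ cuts is to cut precisely the $b_0$ edges $f_i$, leaving each $P_i$ as a single component that pairs uniquely with a $T$-component of size $s_i$; the induced cuts inside $T$ then constitute a \gtp-solution of $(T,k,b,s_1,\dots,s_b)$. Combined with the previous two paragraphs, this establishes that both \gtp~and \btp~are $W[1]$-complete.
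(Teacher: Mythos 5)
Your $W[1]$-hardness argument coincides with the paper's: combine Lemma~\ref{lem:togtp} and Lemma~\ref{lem:converse}, note that the map from \mcc~to $I'$ is computable in \fpt-time with new parameter $k'=k+3{k \choose 2}$, and invoke the $W[1]$-completeness of \mcc. The genuine gap is in the membership half. The characterization of Chen \etal~\cite{yijia} is \emph{not} ``\fpt-time with $f(k)$ nondeterministic steps of $O(\log n)$ bits placed anywhere''; that weaker condition characterizes $W[P]$-style bounded nondeterminism. The $W[1]$ characterization additionally demands \emph{tail}-nondeterminism: all nondeterministic steps must occur during the \emph{last} $h(k)$ steps of the computation, so after the guesses the machine may perform only $h(k)$ further steps before accepting. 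Your algorithm guesses $E_P$ and then ``computes $T-E_P$ deterministically in polynomial time''---a polynomial amount of post-guess work is exactly what this characterization forbids, so as written your argument does not place \gtp~in $W[1]$. The paper's proof circumvents this with a deterministic pre-processing phase executed \emph{before} any guessing: a DFS stores, for every vertex, discovery/finishing time stamps (so ancestor--descendant queries cost $O(1)$) and the size of its subtree. The $k$ edges are then guessed at the very end, the sizes of the $k+1$ components are recovered in $\mathrm{poly}(k)$ RAM steps by building the Hasse diagram of the descendancy relation restricted to the $k$ guessed lower endpoints plus the root and subtracting stored subtree sizes, and the assignment of components to the $b\le k+1$ groups is checked within a further $h(k)$ steps. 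Some such rearrangement is needed for your membership claim to go through.

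Two smaller remarks. First, your third paragraph is not needed for the stated theorem, which concerns \gtp~only (\btp's completeness is a separate corollary in the paper). Second, the converse of your padding reduction is left as an acknowledged obstacle; for comparison, the paper's Lemma~\ref{lem:gtptobtp} achieves the required rigidity not via non-averaging gadgets but by attaching to the root stars of size $5n-s_i$, so large relative to the cut budget $k+b-1\le 2n$ that no two (sub)stars can share a group, followed by an exchange argument showing one may assume no star-edge is cut.
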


\begin{proof}
The $W[1]$-hardness result follows from Lemma~\ref{lem:gtptobtp}, Lemma~\ref{lem:togtp} and Lemma~\ref{lem:converse}. To prove membership in $W[1]$, we use the characterization of the class $W[1]$ given by Chen \etal~\cite{yijia}:

\begin{quote}
A parameterized problem $Q$ is in $W[1]$ if and only if there is a computable function $h$ and a nondeterministic \fpt algorithm $\mathbb{P}$ for a nondeterministic-RAM machine deciding $Q$, such that, for each instance $(x, k)$ of $Q$ ($k$ is the parameter), all nondeterministic steps of $\mathbb{P}$ take place during the last $h(k)$ steps of the computation.
\end{quote}

Therefore, to show that \gtp~is in $W[1]$, it suffices to exhibit such a nondeterministic \fpt algorithm $\mathbb{P}$.

Given an instance $I=(T, k, b, s_1, \ldots, s_b)$ of \gtp, where $T$ is assumed to be rooted at an arbitrary vertex $r \in T$, the algorithm $\mathbb{P}$ starts by performing a pre-processing phase. This phase consists of performing a depth-first search on $T$ to compute (and store) descendancy information that allows us to answer, for any two vertices $u, v \in T$, whether or not $u$ is a descendant of $v$ in $T$. (For instance, for each vertex $w \in V(T)$, we can compute a pair of time stamps $(d(w), f(w))$, where $d(w)$ is the discovery time of $w$, and $f(w)$ is the finishing time for $w$, during the depth first search process. It is well known that, for any two vertices $u, v \in V(T)$, $u$ is a descendant of $v$ in $T$ if and only if $d(v) < d(u) < f(u) < f(v)$; for instance, see~\cite{clrs01}.) Moreover, during this pre-processing phase, we compute (and store), for each vertex $v$, the number of vertices in the subtree $T_v$ of $T$ rooted at $v$.

After the above pre-processing phase is complete, $\mathbb{P}$ (nondeterministically) guesses a set $E'$ of $k$ edges $e_1, \ldots, e_k$ from $T$ to be cut; let $e_i=v_iu_i$, for $i \in [k]$, where $v_i$ is the parent of $u_i$. Next, $\mathbb{P}$ determines the number of vertices in each of the $k+1$ components in $T-E'$ as follows. Using the descendancy information computed in the pre-processing phase, and noting that the descendancy relation is a partial order relation on the vertices of $T$, $\mathbb{P}$ constructs a Hasse diagram for this relation (excluding transitive relationships in the representation) whose vertices are $r$, plus the $k$ vertices $u_i$, for $i \in [k]$ (\ie, a Hasse diagram for the descendancy relation restricted to these vertices). The size of the components in $T-E'$ can now be computed by going over the Hasse diagram bottom-up, and for each vertex $u$ in the Hasse diagram, computing the size of the component containing $u$ by subtracting from the number of vertices in $T_u$ (computed and stored during the pre-processing phase) the number of vertices in each of the subtrees of $T$ rooted at the children of $u$ in the Hasse diagram. Finally, after computing the size of each component in $T-E'$, $\mathbb{P}$ tries each of the \fpt-many possible assignments of these components to the groups, or nondeterministically guesses such an assignment (note that the number of groups is at most $k+1$), and accepts if and only if one of these assignments results in groups of sizes $s_1, \ldots, s_b$. Clearly, all the computation done by $\mathbb{P}$ after the pre-processing phase, including the nondeterministic steps, are upper bounded by $h(k)$, where $h$ is a computable function, and hence, $\mathbb{P}$ meets the required conditions in the characterization of $W[1]$ stated above.
\qed
\end{proof}

Next, we show that \gtp~and~\btp~are equivalent modulo \fpt-reducibility.
It is clear that \btp~$\preceq_{fpt}$~\gtp~via an \fpt-reduction that maps an instance $(T, k, b)$ of \btp~to the instance $(T, k, b, s_1=s_2=\cdots=s_{b}=|V(T)|/b)$. The following lemma proves the converse:

\begin{lemma}\label{lem:gtptobtp}
\gtp~$\preceq_{fpt}$ \btp.
\end{lemma}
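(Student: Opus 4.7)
The plan is to reduce \gtp~to \btp~by attaching, to the input tree $T$, a ``padding'' path for each group whose size falls short of the maximum group size, so that a balanced partition of the enlarged tree decodes to a partition of $T$ with the prescribed sizes. Concretely, from $I = (T, k, b, s_1, \ldots, s_b)$ (assuming $\sum_i s_i = |V(T)|$, as otherwise $I$ is trivially a no-instance), let $s^{*} = \max_i s_i$ and fix any vertex $r \in V(T)$. For each $i$ with $s_i < s^{*}$, attach to $r$ a new path $P_i$ on $d_i = s^{*} - s_i$ vertices via a fresh edge $e_i$; let $t = |\{i : s_i < s^{*}\}|$. The output is the \btp~instance $I' = (T', k+t, b)$, for which a direct count gives $|V(T')| = bs^{*}$ so that the target equal-size is $s^{*}$. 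Since $b \leq k+1$ in every nontrivial instance, $k+t \leq 2k+1$, so the map is computable in polynomial time and qualifies as an \fpt-reduction.

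The forward direction lifts any \gtp~solution $(E_P, \lambda_P)$ of $I$ to a \btp~solution of $I'$ by additionally cutting every attachment edge $e_i$ and placing the detached $P_i$ into group $i$, bringing each group to size $s^{*}$. For the converse, the plan is a two-step normalization-then-extraction argument applied to an arbitrary \btp~solution $P' = (E_{P'}, \lambda_{P'})$ of $I'$. In the normalization step I choose $P'$ to minimize the potential $\mu(P') = (\text{number of internal cuts inside the paths } P_i) + (\text{number of uncut attachment edges } e_i)$, and argue that $\mu(P') = 0$: whenever an attachment is uncut or some $P_j$ is internally split, a local swap that reassembles the split path and uses the freed cut on the uncut attachment yields another \btp~solution of strictly smaller $\mu$. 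Once $\mu(P') = 0$, every $P_i$ is an intact component sitting in a single group, and $E_P := E_{P'} \cap E(T)$ has exactly $k$ edges; the $T$-part of the group containing $P_i$ has size $s^{*} - d_i = s_i$, so reading off the assignment of the $T$-components gives the desired \gtp~solution.

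The main obstacle is making the local swap preserve every group's size. The delicate case is when a fragment of some $P_{i^{*}}$ has been absorbed through the uncut attachment $e_{i^{*}}$ into the component of $T'-E_{P'}$ containing $r$, since the compensating reroute must cancel a padding-fragment contribution by matching $T$-fragments of equal total size. The plan is to chase the chain of groups that share fragments of the various $P_i$: either a single swap along this chain restores balance, or the chain closes into a cycle whose fragments can be reassembled by a simultaneous global exchange that strictly decreases $\mu$. Fleshing out this chain/cycle argument, and checking that it handles all configurations of fragment sharing between the $r$-group and the remaining groups, is the technical heart of the proof.
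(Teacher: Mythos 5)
There is a genuine gap in the converse direction. Because your paddings are small (size $d_i=s^{*}-s_i$), nothing in the instance $I'$ forces a balanced solution to cut the attachment edges $e_i$; the extra budget $t$ can instead be spent inside $T$, and then the balanced partition of $T'$ does not decode to a partition of $T$ with sizes $s_1,\ldots,s_b$ using only $k$ cuts. Concretely, take $T=K_{1,4}$ with center $c$ and leaves $w,x,y,z$, and $I=(T,k=1,b=2,s_1=2,s_2=3)$. Any single cut of $T$ yields components of sizes $1$ and $4$, and the size-$4$ component fits in no group, so $I$ is a no-instance. In your reduction $s^{*}=3$, $t=1$, $d_1=1$; choosing $r=w$ (you allow an arbitrary $r$), $T'$ is $T$ with a pendant vertex $p$ attached at $w$, and $I'=(T',k+t=2,b=2)$ with target group size $3$. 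Cutting $cw$ and $cx$ gives components $\{w,p\}$, $\{x\}$, $\{c,y,z\}$, which group as $3$ and $2+1=3$, so $I'$ is a yes-instance. Moreover, no solution of $I'$ cuts the attachment edge $wp$ and balances the groups, so the minimum of your potential $\mu$ over solutions is strictly positive: the normalization claim ``$\mu(P')=0$'' is false, and no chain/cycle exchange can repair it. The failure is structural rather than a fixable detail of the swap: a path fragment of size $d_i$ is far too small to dominate its group, so freeing a cut to pay for an uncut attachment edge cannot in general be compensated while preserving every group's size.

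The paper's reduction avoids exactly this by making the paddings overwhelming and by padding \emph{every} group: it attaches to a root $r$ of $T$, for each $i\in[b]$, a star $S_i$ with $5n-s_i$ vertices, and outputs $(T',k+b-1,b)$ with target group size $5n$. Since each star has at least $4n$ vertices and at most $k+b-1\leq 2n$ edges are cut, two star remnants cannot share a group of size $5n$; hence at least $b-1$ attachment edges must be cut, leaving at most $k$ cuts inside $T$, and a leaf-swap argument (exchanging single star leaves, which any group can absorb) removes all cuts inside the stars, after which the $T$-part of the group containing $S_i$ has size exactly $s_i$. To salvage your construction you would need the same kind of size blow-up, i.e., pad group $i$ with roughly $cn-s_i$ vertices for every $i$ (not just deficient ones) with $c$ large enough to dwarf $n$ plus the cut budget---which is essentially the paper's proof.
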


\begin{proof}
Let $I=(T, k, b, s_1, \ldots, s_{b})$ be an instance of \gtp, let $n=|T|$, and note that $s_1+ \cdots +s_{b}=n$ (otherwise the instance is a no-instance, and we can map it in constant time to a trivial no-instance of \btp). Note also that since removing at most $k$ edges from a tree results in at most $k+1$ components, we can assume, without loss of generality, that $b \leq k+1$.

For each $i \in [b]$, we define $x_i=5n-s_i$, and we create a star $S_i$ with root $r_i$ and $x_i-1$ leaves.  Let $T'$ be the tree obtained from $T$ by rooting $T$ at any vertex $r$, and adding an edge between each root $r_i$ of a star $S_i$ and $r$, for $i \in [b]$. We map the instance $I$ of \gtp~to the instance $I'=(T', k+b-1, b)$. Since $b \leq k+1$, this reduction is clearly an \fpt-reduction. Next, we prove its correctness.

One direction is easy: suppose that $I$ is a yes-instance of \gtp~and we show that $I'$ is a yes-instance of \btp. Since $I$ is a yes-instance of \gtp, there is a solution $P=(E_P, \f)$ to $I$, where $\f$ assigns the components in $T-E_P$, where $|E_P| \leq k$, to $b$ groups $G_1, \ldots, G_{b}$, such that the size of $G_i$ is $s_i$. Let $G_j$, where $j \in [b]$, be the group that contains the root $r$ of $T$. We define the solution $P'=(E'_P, \f')$ to $I'$ as follows. $E'_P$ consists of the set of edges $E_P$ plus each of the $b-1$ edges $rr_i$, where $i \neq j$. The assignment $\f'$ is defined as follows. The assignment agrees with $\f$ on mapping all components, with the difference that the component that contains $r$, though placed in the same group by $\f'$ as by $\f$, now contains the additional subtree $S_j+r_jr$ of $T'$. For the other stars, $\f'$ maps each star $S_i$, $i \neq j$, to group $G_i$. Since group $G_i$ has size $s_i$ in $I$, group $G_i$ in $I'$ has size $s_i+ |S_i| = s_i + x_i =5n$, for $i \in [b]$.

To prove the converse, suppose that $I'$ is a yes-instance of \btp, and let $P'=(E_P', \f')$ be a solution to $I'$. Note that, by constructions, $|T'| =5b\cdot n$, and hence each of the $b$ groups is assigned exactly $5n$ vertices by $\f'$. We first prove that $P'$ cuts at least $b-1$ of the edges $rr_i$, $i \in [b]$. Suppose not, then two substars $S_i$ and $S_j$ remain connected in $T'-E'_P$, and since
the number of edges cut by $P'$ is at most $k+b-1 \leq n + n+1 -1=2n$, the total size of the two substars is at least $x_i + x_j -2n=10n - s_i -s_j -2n > 5n$ (since each of $s_j, s_j \leq n$). Since the two substars will be placed in the same group whose size is exactly $5n$, this is a contradiction. It follows from above that at least $b-1$ of the edges $rr_i$ are cut by $P'$, and by the same arguments made above, each substar resulting from an $S_i$, $i \in [b]$, after removing the edges of $E'_P$ (\ie, $S_i-E'_P$) must be placed in a distinct group from any other substar $S_j$, $i \neq j$. Let $E_P = E'_P \setminus \{rr_i \mid i \in [b]\}$. It follows from above that $|E_P| \leq b + k-1 - (b-1) \leq k$.

Next, we prove that we can assume that $P'$ cuts no star-edge from any $S_i$, for $i \in [b]$. Note that proving the aforementioned statement completes the proof since it will imply that each group $G_i$ containing $S_i$ must contain components in $T-E_P$ of total size $5n-x_i=s_i$, and those components must constitute all the components of $T-E_P$; this will show that $I$ has a solution $E_P$ that cuts at most $k$ edges. Suppose that a group $G_i$, for $i \in [b]$, contains a (proper) component of $S_j$, for some $j \neq i$. Since a substar of $S_j$ is in $G_j$, it follows that the proper part of $S_j$ contained in $G_i$ consists of leaves from $S_j$, whose edges were cut by $P'$. We claim that there must exist a group $G_q$, $q \in [b]$, containing a substar $S_q$ such that $P'$ cuts at least one edge from star $S_q$, and such that $G_q$ contains a component of $T-E_P$. If this is not the case, then each component of $T-E_P$ must appear with a complete star $S_i$, $i\in [b]$. Since the sum of the sizes of all components of $T-E_P$ is $|T|=n$, each $S_i$ has size $5n-s_i$, and no two substars appear in the same group, it follows that no edge from any $S_i$ is cut, which is a contradiction to our assumption that $P'$ cuts star-edges.

Now let $G_q$ be a group containing a substar $S_q$ and a component $C$ from $T-E_P$. We modify $P'$ as follows. Since $T$ is a tree, $C$ must contain a leaf $u$; let $\pi(u)$ be the parent (if $|C| \neq 1$) of $u$ in $C$. We modify $E'_P$ by removing an edge, say $wr_q$ that it cuts from $S_q$, and adding to $E'_P$ the edge $u\pi(u)$ if $\pi(u)$ exists, thus cutting the edge $u\pi(u)$ in $P'$. We then modify $\f'$ by placing $u$ in the group that contained $w$ (note that $w$ is now attached to $S_q$ in the modified solution). This results in anther solution of $I'$ that cuts no more than the number of edges in $E_P$. Repeating this argument, we end up with a solution to $I'$ that does not cut any edge from any star $S_i$, for $i \in [b]$. The above argument shows that it can be assumed that $P'$ cuts no star-edge from any $S_i$, for $i \in [b]$, and completes the proof.
\qed
\end{proof}
\fi


\iflong
\begin{corollary}
\label{cor:btpwcomplete}
\btp~is $W[1]$-complete.
\end{corollary}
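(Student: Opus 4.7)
The plan is to derive the corollary directly from the two results already established in the preceding subsection: Theorem~\ref{thm:mainwhard}, which asserts that \gtp~is $W[1]$-complete, and Lemma~\ref{lem:gtptobtp}, which gives an \fpt-reduction \gtp~$\preceq_{fpt}$~\btp. Thus both directions (hardness and membership in $W[1]$) for \btp{} will be obtained by transferring them from \gtp{} via mutual \fpt-reductions.

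For $W[1]$-hardness, I would simply compose Theorem~\ref{thm:mainwhard} with Lemma~\ref{lem:gtptobtp}: since \gtp~is $W[1]$-hard and \fpt-reduces to \btp, the problem \btp~is $W[1]$-hard as well. For membership in $W[1]$, I would exhibit the trivial reverse \fpt-reduction \btp~$\preceq_{fpt}$ \gtp~given by mapping an instance $(T, k, b)$ of \btp~to the instance $(T, k, b, s_1, \ldots, s_b)$ of \gtp{} where $s_1 = \cdots = s_b = |V(T)|/b$ (and rejecting immediately if $b$ does not divide $|V(T)|$). This is computable in polynomial time, preserves the parameter exactly, and clearly yields equivalent instances by definition of \btp{}. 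Combined with the $W[1]$-membership of \gtp{} from Theorem~\ref{thm:mainwhard}, this establishes that \btp~$\in W[1]$.

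There is no real obstacle here: the reduction from \btp{} to \gtp{} is immediate from the definitions, and both the non-trivial reduction (Lemma~\ref{lem:gtptobtp}) and the nondeterministic \fpt~algorithm witnessing $W[1]$-membership of \gtp{} (used in the proof of Theorem~\ref{thm:mainwhard}) have already been set up earlier. The only point worth stating explicitly in the write-up is that the $W[1]$ class is closed under \fpt-reductions, so both containments transfer, yielding $W[1]$-completeness of \btp.
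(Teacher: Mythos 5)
Your proposal is correct and matches the paper's own argument: the paper likewise obtains $W[1]$-hardness of \btp{} by composing Theorem~\ref{thm:mainwhard} with Lemma~\ref{lem:gtptobtp}, and membership in $W[1]$ from the membership of \gtp{} together with the trivial \fpt-reduction of \btp{} to \gtp{} (i.e., \btp{} being the restriction with $s_1=\cdots=s_b=|V(T)|/b$). No gaps; your explicit remark about closure of $W[1]$ under \fpt-reductions is just making the same step explicit.
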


\begin{proof}
The $W[1]$-hardness follows from the $W[1]$-hardness of \gtp~proved in Theorem~\ref{thm:mainwhard} and Lemma~\ref{lem:gtptobtp}. Membership in $W[1]$ follows from that of \gtp, proved in Theorem~\ref{thm:mainwhard}, and the fact that \btp~is a restriction of~\gtp, as observed above. \qed
\end{proof}
\fi
\iflong
\section{Subexponential-time Algorithms for \gtp~and~\btp}
\label{sec:algorithms}
\fi
\ifshort
\section{Subexponential-time Algorithms}
\label{sec:algorithms}
\fi
Let $n \in \integ$. A \emph{partition} of $n$ is a collection $X$ of positive integers such that $\sum_{x \in X} x =n$. Let $p(n)$ denote the total number of (distinct) partitions of $n$. It is well known that $p(n) = 2^{\Oh(\sqrt{n})}$~\cite{hardy}. It follows that the total number of partitions of all integers $n'$, where $0 < n' \leq n$, is $\sum_{0 < n' \leq n} p(n') =2^{\Oh(\sqrt{n})}$.

Let $L$ be a list of numbers in $\nat$ that are not necessarily distinct. \iflong (Note that a list may contain zeros.)\fi We denote by $L(i)$ the $i$th number in $L$, and by $L_i$ the sublist of $L$ consisting of the first $i$ numbers. The \emph{length} of $L$, denoted $|L|$, is the number of elements in $L$.

Let $(T, k, b, s_1, \ldots, s_b)$ be an instance of \gtp. Let $n=|T|$. 
Consider a partial assignment of $n' \leq n$ vertices of $T$ to the $b$ groups, with the possibility of some groups being empty. Since the groups are indistinguishable, such an assignment corresponds to a partition of the $n'$ vertices into at most $b$ parts, and can be represented by a \emph{sorted} list $L$ of $b$ numbers in $\nat$ whose sum is $n'$, where $L(i) \leq n'$ for $i \in [b]$, is the number of vertices assigned to group $i$; we call such a representation of the groups, under a partial assignment, a \emph{size representation}, denoted as \srep. Note that the zeroes in a \srep~appear at the beginning. Since each \srep~corresponds uniquely to a partition of a number $n' \leq n$ prefixed by less than $b \leq n$ zeroes, it follows that the total number of $\sigma$-representations is $n \cdot 2^{\Oh(\sqrt{n})}= 2^{\Oh(\sqrt{n})}$.

Let $X, Y, Z$ be three lists of the same length. We write $X= Y \pp Z$ if there is a list $Y'$ obtained via a permutation of the numbers in $Y$, and a list $Z'$ obtained via a permutation of the numbers in $Z$, such that $X(i) = Y'(i) + Z'(i)$, for every $i \in [|X|]$; that is, in the context when the lists are \sreps, $X= Y \pp Z$ if each group-size in $X$ can be obtained, in a one-to-one fashion, by adding a group-size in $Y$ to a group-size in $Z$ (including group-sizes zero). \ifshort We have:\fi

\ifshort
\begin{proposition}
 \label{prop:3vectors}
 There is a subroutine \CR($X, Y, Z$) that determines if $X = Y \pp Z$ in time $2^{\Oh(\sqrt{n})}$.
\end{proposition}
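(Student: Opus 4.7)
The plan is to reformulate $X=Y\pp Z$ as a pairing problem and solve it by dynamic programming with memoization over pairs of sub-\sreps{}. Note that $X=Y\pp Z$ holds if and only if there is a bijection $\phi:[b]\to[b]$ (with $b$ the common length of the three lists) such that the multiset $\{Y(i)+Z(\phi(i)):i\in[b]\}$ equals the multiset of entries of $X$; equivalently, the entries of $Y$ can be paired with those of $Z$ so that the resulting multiset of sums is exactly that of $X$. This is the condition $\CR$ must verify.

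The dynamic program processes the entries of $Y$ in the fixed order $Y(1),\ldots,Y(b)$. At step $i$ the state is a pair $(X_{\mathrm{rem}},Z_{\mathrm{rem}})$, where $X_{\mathrm{rem}}$ is the multi-subset of $X$ still waiting to be hit and $Z_{\mathrm{rem}}$ is the multi-subset of $Z$ still available; the step index $i$ is redundant since $|Z_{\mathrm{rem}}|=b-i+1$. From a state $(X_{\mathrm{rem}},Z_{\mathrm{rem}})$, we branch over every distinct $z\in Z_{\mathrm{rem}}$ with $Y(i)+z\in X_{\mathrm{rem}}$ (multiplicities respected), and recurse on $(X_{\mathrm{rem}}\setminus\{Y(i)+z\},Z_{\mathrm{rem}}\setminus\{z\})$. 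The base case $i=b+1$ succeeds exactly when both multisets are empty. Answers are memoized so that each state is resolved only once.

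For the running-time bound, the key observation is that every reachable $X_{\mathrm{rem}}$ (resp.\ $Z_{\mathrm{rem}}$) is a partition of an integer at most $n$, and is therefore one of the objects counted by the partition function $p(\cdot)$. Since $\sum_{n'=0}^{n}p(n')=2^{\Oh(\sqrt{n})}$, there are at most $2^{\Oh(\sqrt{n})}$ candidate $X_{\mathrm{rem}}$ and at most $2^{\Oh(\sqrt{n})}$ candidate $Z_{\mathrm{rem}}$, so the state space has size $2^{\Oh(\sqrt{n})}$. At each state the branching is bounded by the number of distinct values in $Z_{\mathrm{rem}}$, which is at most $b\le n$, and per-state bookkeeping (canonical encoding of each sorted multiset as a short key plus hash-table lookup) is polynomial in $n$. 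Multiplying gives $2^{\Oh(\sqrt{n})}\cdot n^{\Oh(1)}=2^{\Oh(\sqrt{n})}$, as claimed.

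The main obstacle is the state-space analysis: one must argue that every intermediate $X_{\mathrm{rem}}$ and $Z_{\mathrm{rem}}$ actually reached by the recursion is a bona fide partition of its own sum (so it is counted by $p$), and then invoke the Hardy--Ramanujan asymptotic $p(m)=2^{\Oh(\sqrt{m})}$ together with the summation above to obtain a single $2^{\Oh(\sqrt{n})}$ bound over all integers up to $n$. Once this is in place, the rest of the argument is routine memoized recursion.
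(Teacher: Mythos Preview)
Your proposal is correct and takes essentially the same approach as the paper: a dynamic program whose states are pairs of partition-shaped multisets, with the $2^{\Oh(\sqrt n)}$ bound coming from the Hardy--Ramanujan estimate on $\sum_{m\le n} p(m)$. The only organizational difference is that the paper peels off the elements of $X$ in sorted order and indexes its table by arbitrary length-$i$ sorted lists $(V_i,W_i)$, whereas you peel off the elements of $Y$ in order and memoize over the residual sub-multisets $(X_{\mathrm{rem}},Z_{\mathrm{rem}})$; both lead to the same state count and running time. One small wording issue: $X_{\mathrm{rem}}$ and $Z_{\mathrm{rem}}$ may contain zeros (since \sreps{} can), so they are not literally partitions; but once the cardinality is fixed (which your observation $|Z_{\mathrm{rem}}|=b-i+1$ gives), the multiset is determined by its positive part, so the $2^{\Oh(\sqrt n)}$ count still holds.
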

\fi

\iflong
Let $n \in \nat$, and let $X, Y, Z$ be three \sreps.  We wish to decide if $X = Y \pp Z$. To do so, we apply the subroutine \CR($X, Y, Z$) described in the next section.

\subsection{The Subroutine \CR} \label{subsec:cr}
The subroutine \CR($X, Y, Z$), for \sreps~$X, Y, Z$, uses dynamic programming. It constructs a table $\Lambda$, where for each $ i \in [b]$, and each pair of \sreps~$V$, $W$, we have an entry $\Lambda(X_i, V_i, W_i)$, where $\Lambda(X_i, V_i, W_i)$ is \true~if and only if $X_i=V_i \pp W_i$; if $\Lambda(X_i, V_i, W_i)$ is \true, we also store a witness to the decomposition of $X_i$ into $V_i$ and $W_i$ that we compute during the dynamic programming process. To construct $\Lambda$, we iterate over all values $i= 1, \ldots, b$. For each value $i$, we iterate through all pairs of \sreps~$V, W$. It is clear how to populate the table for $i=1$, and each pair of \sreps~$V, W$, as in this case $\Lambda(X_1, V_1, W_1)$ is true if and only if $X_1(1) = V_1(1) + W_1(1)$. Suppose, inductively, that we have populated the table $\Lambda$, for every $1 \leq j < i$, and every pair of \sreps~$V$ and $W$. To populate the entry $\Lambda(X_i, V_i, W_i)$, for a fixed pair of \sreps~$V$ and $W$, enumerated from among all possible pairs of \sreps, we do the following. We iterate through every pair of integers $j, k$ satisfying $1 \leq j, k \leq i$. If for a pair $j, k$ we have (1) $X_i(i) = V_i(j) + W_i(k)$ and (2) $\Lambda(X'_{i-1}, V'_{i-1}, W'_{i-1})$ is \true, where $X'_{i-1}$ is the list obtained by removing $X_i(i)$ from $X_i$, and  $V'_{i-1}$ and $W'_{i-1}$ are the lists obtained by removing $V_i(j)$ and $W_i(k)$ from $V_i$ and $W_i$ (without changing the respective sorted order of the remaining elements in $V_i$ and $W_i$), respectively, then we set $\Lambda(X_i, V_i, W_i)$ to \true; otherwise, if no such pair of numbers $j, k$ exists, we set $\Lambda(X_i, V_i, W_i)$ to \false.

\begin{proposition}
\label{prop:3vectors}
Let $X, Y, Z$ be three \sreps. The subroutine \CR($X, Y, Z$) determines if $X = Y \pp Z$ in  time $2^{\Oh(\sqrt{n})}$.
\end{proposition}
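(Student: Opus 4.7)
The plan is to establish correctness by induction on the prefix length $i$, and to bound the running time by counting the entries of $\Lambda$ using the classical subexponential bound on partitions.

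For correctness, I would first observe that three sorted lists $A, B, C$ of common length $\ell$ satisfy $A = B \pp C$ exactly when there exist permutations $\pi_B, \pi_C$ of $[\ell]$ with $A(m) = B(\pi_B(m)) + C(\pi_C(m))$ for every $m \in [\ell]$. The base case $i=1$ is immediate from the definition. For the inductive step, if $X_i = V_i \pp W_i$ is witnessed by $(\pi_V, \pi_W)$, then setting $j=\pi_V(i)$ and $k=\pi_W(i)$ yields $X_i(i) = V_i(j)+W_i(k)$, while the restrictions of $\pi_V, \pi_W$ to $[i-1]$ witness $X'_{i-1} = V'_{i-1} \pp W'_{i-1}$; conversely, any pair $(j,k)$ satisfying the two conditions of the recurrence extends an inductive witness for the reduced triple to one for $(X_i, V_i, W_i)$. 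This is exactly the condition evaluated by the subroutine, so a straightforward induction gives $\Lambda(X_b, Y_b, Z_b) = \true$ if and only if $X = Y \pp Z$. Witness reconstruction is handled by storing the successful pair $(j,k)$ with each \true~entry and chasing back to the base case.

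For the running time, the key observation is that every list indexing $\Lambda$ is a sorted tuple of nonnegative integers whose sum is at most $n$: the hypotheses force $X, Y, Z$ to be such lists, and removing one entry from a sorted list of sum $\leq n$ produces another such list. Each sorted list of this form is obtained, up to leading zeros, from a partition of some $n' \leq n$. Since $p(n') = 2^{\Oh(\sqrt{n'})}$~\cite{hardy}, the number of admissible lists is at most $\mathrm{poly}(n) \cdot \sum_{n'=0}^{n} p(n') = 2^{\Oh(\sqrt{n})}$, so the number of entries in $\Lambda$ across all prefix lengths $i \leq b \leq n$ is $2^{\Oh(\sqrt{n})}$. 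Populating one entry scans $O(i^2) = O(n^2)$ pairs $(j,k)$ and performs a constant number of look-ups into $\Lambda$; using a hash table keyed on the canonical encoding of each list, each look-up runs in time polynomial in $n$, so the total running time is $2^{\Oh(\sqrt{n})} \cdot \mathrm{poly}(n) = 2^{\Oh(\sqrt{n})}$.

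The main obstacle is pinning down the correct state space for the dynamic program. The subroutine is described in terms of \sreps~$V, W$, but the recurrence removes an interior element of $V_i$ or $W_i$, and the resulting reduced list need not itself be a prefix of any particular \srep. The remedy is to widen the state space to the set of all sorted tuples of nonnegative integers with sum at most $n$; once one verifies that this enlarged collection is still of size $2^{\Oh(\sqrt{n})}$, both the correctness argument and the time bound go through cleanly.
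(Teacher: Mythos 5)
Your proposal is correct and follows the same overall strategy as the paper: correctness by a straightforward induction on the prefix length, and a running-time bound obtained from the classical estimate $p(n)=2^{\Oh(\sqrt{n})}$ on the number of partitions. The one place where you genuinely diverge is the treatment of the state space. The paper analyzes the cost by iterating over all pairs of \sreps~$V,W$ and all pairs $(j,k)$ with $1 \leq j,k \leq i$, implicitly indexing $\Lambda$ by prefixes of \sreps; it does not comment on the fact that the lookups in the recurrence are on lists $V'_{i-1}, W'_{i-1}$ obtained by deleting an \emph{interior} element of a sorted prefix, which need not be a prefix of any \srep. You spotted this and repaired it by enlarging the index set of $\Lambda$ to all sorted tuples of nonnegative integers with sum at most $n$, observing that this larger family is still of size $2^{\Oh(\sqrt{n})}$ (each such tuple is a partition of some $n' \leq n$ padded with leading zeros), so both correctness and the $2^{\Oh(\sqrt{n})} \cdot n^{\Oh(1)}$ bound survive. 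Your version is therefore a slightly more careful account of the same argument: the paper's counting of pairs of \sreps~and your counting of $\Lambda$-entries are interchangeable, but your widened state space makes the recurrence's lookups well-defined without any further remark, which is a worthwhile tightening of the write-up rather than a different proof.
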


\begin{proof}
The correctness of \CR($X, Y, Z$) follows by a straightforward induction on $|X|$. To analyze its running time, observe that in  \CR($X, Y, Z$), we iterate $b \leq n$ times. In iteration $i$, $i \in [b]$, we enumerate every pair of \sreps~$V, W$, and for each pair $V, W$, we iterate through every pair of integers $j, k$ satisfying $1 \leq j, k \leq i$. The total number of pairs of \sreps~is $2^{\Oh(\sqrt{n})} \cdot 2^{\Oh(\sqrt{n})}=2^{\Oh(\sqrt{n})}$, and the total number of pairs of integers $j, k$ satisfying $1 \leq j, k \leq i$ is $\Oh(n^2)$. For a fixed pair of \sreps~$V, W$, and a fixed pair of integers $i, j$, all the operations performed in the dynamic programming process can be implemented in $n^{O(1)}$ time.
The running time of \CR($X, Y, Z$) is thus upper bounded by $2^{\Oh(\sqrt{n})} \cdot n^{O(1)}= 2^{\Oh(\sqrt{n})}$.  \qed
\end{proof}
\fi
\iflong
\subsection{The Exact Algorithm}\label{subsec:algo}
 In this section, we present a subexponential-time algorithm for \gtp, and hence for \btp, that runs in $2^{\Oh(\sqrt{n})}$ time, where $n =|V(T)|$.\fi~Let $(T, k, b, s_1, \ldots, s_b)$ be an instance of \gtp. The key observation that leads to a subexponential-time algorithm is that the $b$ groups are indistinguishable. Therefore, all assignments of the $n$ vertices in $T$ to the $b$ groups can be compactly represented by lists of numbers, where each list corresponds to a partition of $n$ into $b$ parts. This simple, yet crucial, observation allows for a ``compact representation'' of all solutions using a solution space of size $2^{\Oh(\sqrt{n})}$. \iflong Intuitively speaking, this solution space consists of solutions corresponding to all possible partitions of $n$, whose number is $p(n)$. We start by giving an intuitive description of the algorithm, then proceed to the details.\fi

Suppose that $T$ is rooted at an arbitrary vertex $r$. The algorithm uses dynamic programming, starting from the leaves of $T$, and climbing $T$ up to its root $r$.
At each vertex $v$ in $T$, we construct a table $\Gamma_v$ that contains the following information. For each \srep~$X$, for each $k' =0, \ldots, n$, and for each $s \in [n]$,
$\Gamma_v(k', X, s)$ is \true~if and only if there is a cut $C$ of $k'$ edges in $T_v$ \iflong (the subtree of $T$ rooted at $v$) \fi such that the component $P_v$ containing $v$ in $T_v -C$ has size $s$ \ifshort (note that this component, so far, is still attached to the rest of the tree above $v$)\fi, and such that there is an assignment to the components in $T_v - C - P_v$ to the $b$ groups whose \srep~is $X$; otherwise, $\Gamma_v(k', X, s)$ is \false. If $\Gamma_v(k', X, s)$ is \true, we store a witness that realizes such a partial solution\iflong (\ie, we store a set $C$ of $k'$ edges in $T_v$ such that the component $P_v$ containing $v$ in $T_v -C$ has size $s$, and we store an assignment to the components in $T_v - C-P_v$ to the $b$ groups whose \srep~is $X$)\fi.  To compute $\Gamma_v$, we consider the children of $v$ one by one. After a child $u_i$ of $v$ is considered, we have computed a partial table $\Gamma_i$ containing partial solutions up to child $u_i$; this is done by considering the two possibilities of whether or not the edge $vu_i$ is in the cut $C$. Although the above may seem like we are enumerating all possibilities for the edges between $v$ and its children to be cut or not, the crucial ingredient for this approach to achieve the desired running time is that the table $\Gamma_v$---at vertex $v$---can be computed based on the tables corresponding to the children of $v$ in $2^{\Oh(\sqrt{n})}$ time. This analysis works similarly to iterative compression, as the table $\Gamma_i$ \iflong, computed after child $u_i$ has been considered,\fi is a compressed table, storing $2^{\Oh(\sqrt{n})}$ many entries\iflong, regardless of the status of the edges between $v$ and its children considered so far. We proceed to the details\fi.

\iflong Recall that, for each vertex $v \in T$, for each $k'=0, \ldots, n$, for each $s \in [n]$, and for each \srep~$X$, $\Gamma_v(k', X, s)$ is \true~if and only if there is a cut $C$ of $k'$ edges in $T_v$, and an assignment to the components in $T_v -C -P_v$, with $s$ being the size of the component $P_v$ containing $v$ in $T_v - C$ (note that this component, so far, is still attached to the rest of the tree above $v$), that realizes the \srep~$X$. The dynamic programming algorithm proceeds in a bottom-up fashion, from the leaves of $T$ to its root $r$. \fi Suppose that the algorithm is at vertex $v$ whose children are $u_1,\ldots, u_d$, and that the tables $\Gamma_{u_1}, \ldots, \Gamma_{u_d}$ associated with $u_1,\ldots, u_d$, respectively, have been constructed. To compute $\Gamma_v$, we iterate through the edges $vu_1, \ldots, vu_d$. Let $T_p$, for $p=1, \ldots, d$, be the subtree of $T$ rooted at $v$ that is induced by the vertex-set $(\bigcup_{j=1}^{p} V(T_{u_j})) \cup \{v\}$. Consider edge $vu_i$, and assume inductively, that a table $\Gamma_{i-1}$ has been computed (based on tables $\Gamma_{u_1}, \ldots, \Gamma_{u_{i-1}}$) that contains the following information. For each $k'=0, \ldots, n$, for each $s \in [n]$, and for each \srep~$X$, $\Gamma_{i-1}(k', X, s)$ is \true~if and only if there is a cut $C$ of $k'$ edges in $T_{i-1}$, with $s$ being the size of the component $P_v$ containing $v$ in $T_{i-1} - C$, and an assignment to the components in $T_{i-1} -C -P_v$ that realizes $X$\iflong; if $\Gamma_{i-1}(k', X, s)$ is \true, we also store a witness to the partial solution\fi. After considering $vu_i$, we will compute a table $\Gamma_i$ such that, for each $k'=0, \ldots, n$, for each $s \in [n]$, and for each \srep~$X$, $\Gamma_{i}(k', X, s)$ is \true~if and only if there is a cut $C$ of $k'$ edges in $T_{i}$, with $s$ being the size of the component $P_v$ containing $v$ in $T_{i} - C$, and an assignment to the components in $T_{i} -C-P_v$ that realizes $X$\iflong; if $\Gamma_{i}(k', X, s)$ is \true, we also store a witness to the partial solution\fi. We explain how the Boolean value $\Gamma_{i}(k', X, s)$ is computed, and omit how the witness \ifshort is stored\fi \iflong can be stored, as this is straightforward\fi. After we are done computing $\Gamma_{d}$, we set $\Gamma_v = \Gamma_{d}$.

To compute $\Gamma_i$, we compute two tables $\Gamma_{i}^{-}$ and $\Gamma_{i}^{+}$, and set $\Gamma_i= \Gamma_{i}^{-} \cup \Gamma_{i}^{-}$. Table $\Gamma_{i}^{-}$ contains the solutions that can be obtained by cutting edge $uv_i$, and $\Gamma_{i}^{+}$ contains those that can be obtained by not cutting edge $vu_i$. \iflong We explain next how each of $\Gamma_{i}^{-}$ and $\Gamma_{i}^{+}$ is computed.\fi

\noindent {\bf 1.}\  To compute $\Gamma_{i}^{-}$, we enumerate each possible triplet $(k', X, s)$, where $k' =0, \ldots, n$, $s \in [n]$, and $X$ is a \srep.  Fix such a triplet $(k', X, s)$. To compute $\Gamma_{i}^{-}(k', X, s)$, we iterate through every entry in $\Gamma_{u_i}$ containing $(k_{u_i}, Y, s_{u_i})$ and every entry of $\Gamma_{i-1}$ containing $(k_{i-1}, Z, s_{i-1})$ such that $k'=k_{u_i}+k_{i-1} +1$ (because 1 more cut is introduced, corresponding to the edge $vu_i$), and $s = s_{i-1}$ because the component $P_{u_i}$ containing $u_i$ of size $s_{u_i}$ becomes a separate component after $vu_i$ is cut. Since $P_{u_i}$ becomes a separate component, it will be placed into one of the groups, and hence, it contributes its size to one of the numbers in the \srep~$X$. We enumerate each number in $X$ as the number that $P_{u_i}$ contributes to. For each number $j$ in $X$ satisfying $j \geq |P_{u_i}|$, we subtract $|P_{u_i}|$ from $j$ in $X$ to obtain a new \srep~$X'$ from $X$, and then call \CR($X', Y, Z$); $\Gamma_{i}^{-}(k', X, s)$ is \true~iff for some number $j$ in $X$, \CR($X', Y, Z$) returns \true.

\noindent {\bf 2.} To compute $\Gamma_{i}^{+}$, we enumerate each triplet $(k', X, s)$, where $k' =0, \ldots, n$, $s \in [n]$, and $X$ is a \srep. Fix such a triplet $(k', X, s)$. To compute $\Gamma_{i}^{+}(k', X, s)$, we iterate through every entry in $\Gamma_{u_i}$ containing $(k_{u_i}, Y, s_{u_i})$, and every entry in $\Gamma_{i-1}$ containing $(k_{i-1}, Z, s_{i-1})$, such that $k'=k_{u_i}+k_{i-1}$, and $s = s_{u_i}+ s_{i-1}$ (because $s_{u_i}$ is attached to $v$). We call \CR$(X, Y, Z$), and set $\Gamma_{i}^{+}(k', X, s)$ to \true~iff \CR($X, Y, Z$) returns \true. \\

\iflong
When the table $\Gamma_r$, at the root $r$ of $T$, has been computed, we iterate through the entries in $\Gamma_r$ to determine if for the desired value of $k$ (or the minimum value of $k$ in case we are interested in solving the optimization version of the problem), an entry $\Gamma_r(k, X, s)$ is \true, such that one of the sorted lists $Y_1, \ldots, Y_b$, is $[s_1, \ldots, s_b]$, where $Y_i$, $i \in [b]$, is the list obtained by adding $s$ to $X(i)$ and sorting the resulting list (assuming, without loss of generality that $s_1 \leq s_2 \cdots \leq s_b$), and return a witness to the solution (if the solution exists); otherwise, we return \false. Note that \btp~is the restriction to \gtp~to instances in which $s_1 = \cdots = s_b = |V(T)|/b$, and hence can be solved by the same algorithm.
\fi
\ifshort \vspace*{-4mm}\fi
\begin{theorem} \label{thm:exactalgorithm}
The dynamic programming algorithm described above solves \gtp~and \btp~in time $2^{\Oh(\sqrt{n})}$.
\end{theorem}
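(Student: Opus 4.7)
The plan is to establish correctness by induction on the structure of $T$ (processed bottom-up), and then to analyse the running time by exploiting the bound $p(n)=2^{\Oh(\sqrt{n})}$ on the number of partitions of $n$.

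For correctness, I would first verify the base case: if $v$ is a leaf of $T$, then $T_v$ consists of the single vertex $v$, so $\Gamma_v(0,X,1)$ should be \true~exactly when $X$ is the all-zero \srep{} (no component has yet been separated from $P_v$), and all other entries of $\Gamma_v$ should be \false. For the inductive step I would fix an internal vertex $v$ with children $u_1,\ldots,u_d$, assume the tables $\Gamma_{u_1},\ldots,\Gamma_{u_d}$ satisfy the specification given in the algorithm description, and then prove by induction on $i$ that $\Gamma_i$ correctly describes partial solutions on $T_i$. The key invariant to verify is that, when the algorithm merges an entry $(k_{i-1},Z,s_{i-1})$ of $\Gamma_{i-1}$ with an entry $(k_{u_i},Y,s_{u_i})$ of $\Gamma_{u_i}$, the two cases (cutting or keeping the edge $vu_i$) exhaust all partial solutions on $T_i$: in the ``keep'' case the component at $v$ grows from $s_{i-1}$ to $s_{i-1}+s_{u_i}$ and the \srep{}s of the previously isolated components must be combined coordinate-wise, which is exactly captured by the condition $X=Y\pp Z$ tested by \CR; in the ``cut'' case the component $P_{u_i}$ becomes a new isolated component that is appended to one of the groups, and the guess of which coordinate of $X$ it contributes to, followed by a call to \CR{} on the reduced \srep{} $X'$, covers every possibility. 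Because both the decomposition of $X$ and the position of $P_{u_i}$ inside $X$ are enumerated, no solution is missed, and because the two source entries describe disjoint subtrees no solution is double-counted; Proposition~\ref{prop:3vectors} supplies the soundness of every \CR{}-call. Finally, at the root $r$, a global solution $(E_P,\f)$ to $(T,k,b,s_1,\dots,s_b)$ exists iff some entry $\Gamma_r(k,X,s)$ is \true{} such that inserting $s$ into $X$ as one additional group-size produces (after sorting) the target list $[s_1,\ldots,s_b]$; this last check is straightforward.

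For the running time, my plan is the following accounting. The text shows that the number of \sreps{} over all integers $n'\leq n$ is $2^{\Oh(\sqrt{n})}$, so each table $\Gamma_v$, $\Gamma_i$, $\Gamma_i^-$, $\Gamma_i^+$ has at most $n\cdot n\cdot 2^{\Oh(\sqrt{n})}=2^{\Oh(\sqrt{n})}$ entries. Computing a single entry of $\Gamma_i^-$ or $\Gamma_i^+$ requires iterating over all pairs of entries from $\Gamma_{u_i}$ and $\Gamma_{i-1}$ ($2^{\Oh(\sqrt{n})}\cdot 2^{\Oh(\sqrt{n})}=2^{\Oh(\sqrt{n})}$ pairs) and, for each, invoking \CR, which by Proposition~\ref{prop:3vectors} also runs in $2^{\Oh(\sqrt{n})}$. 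Thus the cost of building $\Gamma_i$ from $\Gamma_{i-1}$ and $\Gamma_{u_i}$ is $2^{\Oh(\sqrt{n})}$, and summing over the $n-1$ child-edges of $T$ yields a global bound of $n\cdot 2^{\Oh(\sqrt{n})}=2^{\Oh(\sqrt{n})}$. The final extraction step at the root adds only a polynomial factor. Since \btp{} is the restriction of \gtp{} to $s_1=\cdots=s_b=n/b$, the same algorithm solves both.

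The main obstacle I anticipate is the bookkeeping in the inductive step, rather than any single difficult estimate: one must show that the two operations encoded by $X=Y\pp Z$ (``merge two disjoint assignments'') and by the reduction $X\mapsto X'$ plus recursion (``absorb a newly isolated component into one existing group-size'') together generate exactly the set of partial assignments on $T_i$, without overcounting. Handling this requires being precise about the permutations hidden inside the operator $\pp$ and about how a single component-size can be added to a group that previously had size zero (an empty group), since zeros are permitted at the beginning of a \srep{}. Once this structural lemma is in place, the dynamic programming invariant propagates cleanly from leaves to the root, and the subexponential bound follows from the partition-function estimate as above.
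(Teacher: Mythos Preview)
Your proposal is correct and follows essentially the same approach as the paper: an inductive argument on the tables $\Gamma_i$ for correctness, and a running-time analysis that multiplies the $2^{\Oh(\sqrt{n})}$ bound on the number of \sreps{} (and hence table entries) by the $2^{\Oh(\sqrt{n})}$ cost of enumerating pairs of entries and invoking \CR. If anything, your correctness sketch is more detailed than the paper's, which simply asserts that the invariant propagates inductively; your running-time accounting (summing over the $n-1$ child-edges rather than over vertices) is a trivial reorganization of the same computation.
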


\iflong
\begin{proof}
The correctness of the algorithm follows by an inductive proof showing that the invariant properties about the tables $\Gamma_i$, obtained during the computation at vertex $v \in T$, hold true, assuming that they hold true for table $\Gamma_{i-1}$ and the tables at the children of $v$.

To analyze the running time of the algorithm, it suffices to show that the computation of $\Gamma_v$, at any vertex $v \in T$, takes $2^{\Oh(\sqrt{n})}$ time, as then the overall running time of the algorithm is upper bounded by $n^{O(1)} \cdot 2^{\Oh(\sqrt{n})}=2^{\Oh(\sqrt{n})}$. To compute
$\Gamma_v$, at a vertex $v \in T$ with $d$ children $u_1, \ldots, u_d$, the algorithm iterates through each child $u_i$ of $v$, and computes the table $\Gamma_i$ from the two tables $\Gamma_{i-1}$ and $\Gamma_{u_i}$, the size of each is $2^{\Oh(\sqrt{n})}$. To compute $\Gamma_i$, the algorithm computes two tables $\Gamma_{i}^{-}$ and $\Gamma_{i}^{+}$, by distinguishing whether or not edge $vu_i$ is cut or not, and takes their union. The computation of each of these two tables is done by enumerating all triplets $(k', X, s)$, where $k'=0, \ldots, n$, $X$ is a \srep, and $s \in [n]$. Since the number of \sreps~is $2^{\Oh(\sqrt{n})}$, the total number of these triplets is $O(n^2) \cdot 2^{\Oh(\sqrt{n})}=2^{\Oh(\sqrt{n})}$. For each triplet $(k', X, s)$, to compute $\Gamma_{i}^{-}(k', X, s)$ (resp.~$\Gamma_{i}^{+}(k', X, s)$), the algorithm enumerates all entries in each of $\Gamma_{i-1}$ and $\Gamma_{u_i}$; there are $2^{\Oh(\sqrt{n})}\cdot 2^{\Oh(\sqrt{n})}=2^{\Oh(\sqrt{n})}$ many entries. The algorithm then performs some polynomial-time computation and calls \CR(), which runs in time $2^{\Oh(\sqrt{n})}$. Therefore, computing each of the two tables $\Gamma_{i}^{-}$ and $\Gamma_{i}^{+1}$, and hence the table $\Gamma_{i}$, takes time $n^{O(1)} \cdot 2^{\Oh(\sqrt{n})}=2^{\Oh(\sqrt{n})}$. It follows that computing $\Gamma_v$ takes time $d \cdot 2^{\Oh(\sqrt{n})} = 2^{\Oh(\sqrt{n})}$. Note that a crucial property to obtain this running time is that the size of each table $\Gamma_v$, for $v \in T$, remains $2^{\Oh(\sqrt{n})}$, because the number of \sreps, and hence triplets $(k, X, s)$, is $2^{\Oh(\sqrt{n})}$.\qed
\end{proof}
\fi

\bibliographystyle{plain}
\bibliography{ref}

\end{document}